\numberwithin{equation}{section}
\newtheorem{Theorem}{Theorem}[section]
\newtheorem{Corollary}[Theorem]{Corollary}
\newtheorem{Lemma}[Theorem]{Lemma}
\newtheorem{Proposition}[Theorem]{Proposition}
\newcommand{\al}{\alpha}
\newcommand{\be}{\beta}
\newcommand{\Ga}{\Gamma}
\newcommand{\bbZ}{\mathbb{Z}}
\newcommand{\bbC}{\mathbb{C}}
\begin{document}

\allowdisplaybreaks

\newcommand{\arXivNumber}{1601.05327}

\renewcommand{\PaperNumber}{051}

\FirstPageHeading

\ShortArticleName{Hypergeometric $\tau$ Functions of the $q$-Painlev\'e Systems of Types $A_4^{(1)}$ and $(A_1+A_1')^{(1)}$}

\ArticleName{Hypergeometric $\boldsymbol{\tau}$ Functions of the $\boldsymbol{q}$-Painlev\'e\\ Systems of Types $\boldsymbol{A_4^{(1)}}$ and $\boldsymbol{(A_1+A_1')^{(1)}}$}

\Author{Nobutaka NAKAZONO}

\AuthorNameForHeading{N.~Nakazono}

\Address{School of Mathematics and Statistics, The University of Sydney,\\ New South Wales 2006, Australia}
\Email{\href{mailto:nobua.n1222@gmail.com}{nobua.n1222@gmail.com}}
\URLaddress{\url{http://researchmap.jp/nakazono/}}

\ArticleDates{Received February 01, 2016, in f\/inal form May 16, 2016; Published online May 20, 2016}

\Abstract{We consider $q$-Painlev\'e equations arising from
birational representations of the extended af\/f\/ine Weyl groups of $A_4^{(1)}$- and $(A_1+A_1)^{(1)}$-types.
We study their hypergeometric solutions on the level of $\tau$ functions.}

\Keywords{$q$-Painlev\'e equation; basic hypergeometric function; af\/f\/ine Weyl group; $\tau$ function}

\Classification{33D05; 33D15; 33E17; 39A13}

\section{Introduction}

\subsection{Purpose}

The purpose of this paper is to construct the hypergeometric $\tau$ functions associated with
$q$-Painlev\'e equations of $A_4^{(1)}$- and $A_6^{(1)}$-surface types in Sakai's classif\/ication~\cite{SakaiH2001:MR1882403}.
As a corollary, we obtain the hypergeometric solutions of the corresponding $q$-Painlev\'e equations.

This work is motivated by the project to construct all possible hypergeometric $\tau$ functions associated with
the multiplicative surface types in the Sakai's classif\/ication~\cite{SakaiH2001:MR1882403}, that is,
$A_0^{(1)}$-, $A_1^{(1)}$-, $A_2^{(1)}$-, $A_3^{(1)}$-, $A_4^{(1)}$-, $A_5^{(1)}$- and $A_6^{(1)}$-surface types.
The corresponding symmetry groups are $W\big(E_8^{(1)}\big)$, $\widetilde{W}\big(E_7^{(1)}\big)$, $\widetilde{W}\big(E_6^{(1)}\big)$, $\widetilde{W}\big(D_5^{(1)}\big)$, $\widetilde{W}\big(A_4^{(1)}\big)$, $\widetilde{W}\big((A_2+A_1)^{(1)}\big)$ and $\widetilde{W}\big((A_1+A_1')^{(1)}\big)$, respectively.
The works for $W\big(E_8^{(1)}\big)$-type \cite{MasudaT2011:MR2765599}, $\widetilde{W}\big(E_7^{(1)}\big)$-type~\cite{MasudaT2009:MR2506177} and $\widetilde{W}\big((A_2+A_1)^{(1)}\big)$-type~\cite{NakazonoN2010:MR2769931} have been done.
In this paper, we consider the hypergeometric $\tau$ functions of $\widetilde{W}\big(A_4^{(1)}\big)$- and $\widetilde{W}\big((A_1+A_1')^{(1)}\big)$-types.

\subsection{Background}

Discrete Painlev\'e equations are nonlinear ordinary dif\/ference equations of second order,
which include discrete analogues of the six Painlev\'e equations,
and are classif\/ied by types of rational surfaces connected to af\/f\/ine Weyl groups~\cite{SakaiH2001:MR1882403}.
They admit particular solutions, so called hypergeometric solutions,
which are expressible in terms of the hypergeometric type functions, when some of the parameters take special values
(see, for example, \cite{KMNOY2004:MR2077840,KMNOY2005:MR2153786,KN2015:MR3340349} and references therein).
Together with the Painlev\'e equations, discrete Painlev\'e equations are now regarded as one of the most important classes of equations
in the theory of integrable systems (see, e.g., \cite{GR2004:MR2087743,KNY2015:arXiv150908186K}).

It is well known that the $\tau$ functions play a crucial role in the theory of integrable systems~\cite{book_MJD2000:MR1736222}, and
it is also possible to introduce them in the theory of Painlev\'e systems \cite{JM1981:MR625446,JM1981:MR636469,JMU1981:MR630674,book_NoumiM2004:MR2044201,OkamotoK1987:MR916698,OkamotoK1987:MR914314,OkamotoK1986:MR854008,OkamotoK1987:MR927186}.
A representation of the af\/f\/ine Weyl groups can be lifted on the level of the $\tau$ functions
\cite{JNS2015:MR3403054,JNS:paper5,KMNOY2003:MR1984002,KMNOY2006:MR2353465,MasudaT2009:MR2506177,MasudaT2011:MR2765599,TsudaT2006:MR2207047,TM2006:MR2202304},
which gives rise to various bilinear equations of Hirota type satisf\/ied by the $\tau$ functions.

Usually, the hypergeometric solutions of discrete Painlev\'e equations are derived by reducing the bilinear equations to
the Pl\"ucker relations by using the contiguity relations satisf\/ied by the entries of determinants
\cite{HK2007:MR2392886,HKW2006:MR2264726,JKM2006:MR2297948,KK2003:MR1952868,KM1999:MR1694666,KNY2001:MR1876614,KO1998:MR1629434,KOS1995:MR1341981,KOSGR1994:MR1267458,ON1992:MR1200649,SakaiH1998:MR1632613}.
This method is elementary, but it encounters technical dif\/f\/iculties for discrete Painlev\'e equations with large symmetries.
In order to overcome this dif\/f\/iculty, Masuda has proposed a method of constructing hypergeometric solutions
under a certain boundary condition on the lattice where the $\tau$ functions live, so that they are consistent with the action of the af\/f\/ine Weyl groups. We call such hypergeometric solutions hypergeometric $\tau$ functions \cite{MasudaT2009:MR2506177,MasudaT2011:MR2765599,NakazonoN2010:MR2769931}.
Although this requires somewhat complex calculations, the merit is that it is systematic and can be applied to the systems with large symmetries.

Some discrete Painlev\'e equations have been found in the studies of random matrices \cite{BK1990:MR1040213,IZ1980:MR562985,PS1996:Unitary-matrix}.
As one such example, let us consider the partition function of the Gaussian Unitary Ensemble of an $n\times n$ random matrix:
\begin{gather*}
 Z_n^{(2)}=\int^\infty_{-\infty}\cdots\int^\infty_{-\infty}\Delta(t_1,\dots,t_n)^2\prod_{i=1}^n{\rm e}^{-g_1{t_i}^2-g_2{t_i}^4}{\rm d}t_i,
\end{gather*}
where $g_2>0$ and $\Delta(t_1,\dots,t_n)$ is Vandermonde's determinant. Letting
\begin{gather*}
 R_n=\frac{Z_{n+1}^{(2)}Z_{n-1}^{(2)}}{\big(Z_n^{(2)}\big)^2},
\end{gather*}
we obtain the following dif\/ference equation \cite{BK1990:MR1040213,GORS1998:MR1626199,GRP1991:MR1125950,RGH1991:MR1125951}
\begin{gather}\label{eqn:intro_dp1}
 R_{n+1}+R_n+R_{n-1}=\frac{n}{4g_2} \frac{1}{R_n}-\frac{g_1}{2g_2}.
\end{gather}
Equation \eqref{eqn:intro_dp1} is known as a discrete analogue of the Painlev\'e I equation
and also as a B\"acklund transformation of the Painlev\'e IV equation.
The partition function~$Z_n^{(2)}$ corresponds to hypergeometric $\tau$ functions.
Such relations between discrete Painlev\'e equations and random matrices are well investigated.
Moreover, in recent years, the relations between $\tau$ functions of Painlev\'e systems and
a certain class of integrable partial dif\/ference equations introduced by Adler--Bobenko--Suris and Boll
\cite{ABS2003:MR1962121,ABS2009:MR2503862,BollR2011:MR2846098,BollR2012:MR3010833,BollR:thesis},
which include a discrete analogue of the Korteweg--de Vries equation,
are well investigated \cite{AHKM2014:MR3221837,HKM2011:MR2788707,JNS:paper3,JNS2015:MR3403054,JNS:paper5}.
Throughout these relations and by using the hypergeometric $\tau$ functions,
a discrete analogue of the power function was derived and its properties,
such as discrete analogue of the Riemann surface and circle packing, were shown in
\cite{AgafonovSI2003:MR2021726,AgafonovSI2003:MR1957234,AB2000:MR1747617,AB2003:MR2006759,AHKM2014:MR3221837,NRGO2001:MR1819383}.
These results consolidate the importance of the studies of the hypergeometric $\tau$ function for applications of Painlev\'e systems.

In \cite{HK2007:MR2392886,HKW2006:MR2264726}, the hypergeometric solutions of the $q$-Painlev\'e equations \eqref{eqn:qp5_1} and~\eqref{eqn:A1A1_qp2_1} (or~\eqref{eqn:A1A1_qp2_2}) are constructed by solving the minimum required bilinear equations to obtain those equations.
In this paper, we solve all bilinear equations arising from the actions of the translation subgroups of $\widetilde{W}\big(A_4^{(1)}\big)$ and $\widetilde{W}\big((A_1+A_1')^{(1)}\big)$, that is, the hypergeometric $\tau$ functions given in Theo\-rems~\ref{theorem:A4_hypertau} and~\ref{theorem:A1A1_hypertau} are for not only the hypergeometric solutions of the $q$-Painlev\'e equations~\eqref{eqn:qp5_1} and~\eqref{eqn:A1A1_qp2_1} but also those of other $q$-Painlev\'e equations, e.g.,~\eqref{eqn:qp5_2},~\eqref{eqn:A1A1_qP_T2} and~\eqref{eqn:A1A1_qP_T3} (see Corollaries~\ref{corollary:A4_HGsol} and~\ref{corollary:A1A1_HGsol}).
Moreover, as mentioned above we can derive the various integrable partial dif\/ference equations from the
$\tau$ functions of discrete Painlev\'e equations (see, for example,~\cite{HKM2011:MR2788707,JNS2015:MR3403054,JNS:paper5}).
Therefore, the hypergeometric $\tau$ functions constructed in this paper also give the hypergeometric solutions of the partial dif\/ference equations appeared in~\cite{JNS2015:MR3403054, JNS:paper5}.

\subsection{Plan of the paper}

This paper is organized as follows: in Section~\ref{section:HGtau_A4}, we f\/irst introduce $\tau$ functions with the representation of the af\/f\/ine Weyl group $\widetilde{W}\big(A_4^{(1)}\big)$. Next, we construct the hypergeometric $\tau$ functions of $\widetilde{W}\big(A_4^{(1)}\big)$-type (see Theorem~\ref{theorem:A4_hypertau}). Finally, we obtain the hypergeometric solutions of the $q$-Painlev\'e equations of $A_4^{(1)}$-surface type (see Corollary~\ref{corollary:A4_HGsol}). In Section~\ref{section:HGtau_A1A1}, we summarize the result for the $\widetilde{W}\big((A_1+A_1')^{(1)}\big)$-type (or, $A_6^{(1)}$-surface type).

\subsection[$q$-Special functions]{$\boldsymbol{q}$-Special functions}
We use the following conventions of $q$-analysis with $|p|,|q|<1$ throughout this paper \cite{book_GR2004:MR2128719}.
\begin{itemize}\itemsep=0pt
\item $q$-Shifted factorials:
\begin{gather*}
(a;q)_n=\prod_{i=0}^{n-1}\big(1-q^ia\big),\qquad n=1,2,\dots,\qquad (a;q)_\infty=\prod_{i=0}^\infty\big(1-q^ia\big),\\
(a;p,q)_\infty=\prod_{i,j=0}^\infty\big(1-q^ip^ja\big).
\end{gather*}
\item Modif\/ied Jacobi theta function:
\begin{gather*}
 \Theta(a;q)=(a;q)_\infty\big(qa^{-1};q\big)_\infty.
\end{gather*}
\item Elliptic gamma function:
\begin{gather*}
 \Ga(a;p,q)=\frac{\big(pqa^{-1};p,q\big)_{\infty}}{(a;p,q)_{\infty}}.
\end{gather*}
\item Basic hypergeometric series:
\begin{gather*}
 {}_s\varphi_r\left(\begin{matrix}a_1,\dots,a_s\\b_1,\dots,b_r\end{matrix};q,z\right)
 =\sum_{n=0}^{\infty}\frac{(a_1,\dots,a_s;q)_n}{(b_1,\dots,b_r;q)_n(q;q)_n} \big[(-1)^nq^{n(n-1)/2}\big] ^{1+r-s}z^n,
\end{gather*}
where
\begin{gather*}
 (a_1,\dots,a_s;q)_n=\prod_{i=1}^s(a_i;q)_n.
\end{gather*}
\end{itemize}
We note that the following formulae hold
\begin{alignat*}{3}
 &\frac{(q^na;q)_\infty}{(a;q)_\infty}=\prod_{i=0}^{n-1}\frac{1}{1-q^ia},\qquad
 &&\frac{\Theta(q^na;q)}{\Theta(a;q)}=(-1)^n\prod_{i=0}^{n-1}\frac{1}{q^ia},&\\
 &\frac{(q^na;p,q)_\infty}{(a;p,q)_\infty}=\prod_{i=0}^{n-1}\frac{1}{(q^ia;p)_\infty},\qquad
 &&\frac{(p^na;p,q)_\infty}{(a;p,q)_\infty}=\prod_{i=0}^{n-1}\frac{1}{(p^ia;q)_\infty},&\\
 &\frac{\Ga(q^na;p,q)}{\Ga(a;p,q)}=\prod_{i=0}^{n-1}\Theta\big(q^ia;p\big),\qquad
 &&\frac{\Ga(p^na;p,q)}{\Ga(a;p,q)}=\prod_{i=0}^{n-1}\Theta\big(p^ia;q\big),&
\end{alignat*}
where $n\in\bbZ_{>0}$.

\section[Hypergeometric $\tau$ functions of $\widetilde{W}\big(A_4^{(1)}\big)$-type]{Hypergeometric $\boldsymbol{\tau}$ functions of $\boldsymbol{\widetilde{W}\big(A_4^{(1)}\big)}$-type}\label{section:HGtau_A4}

In this section, we construct the hypergeometric $\tau$ functions of $\widetilde{W}\big(A_4^{(1)}\big)$-type.

\subsection[$\tau$ functions]{$\boldsymbol{\tau}$ functions}
Let us consider ten variables: $\tau_i^{(j)}$ $(i=1,2$, $j=1,\dots,5)$ and
six parameters: $a_0,\dots,a_4,q\in\bbC^\ast$ with the following three relations for the variables
\begin{subequations}\label{eqns:A4_conditions_tau}
\begin{gather}
 \tau_2^{(1)}=\frac{a_0 a_1 \big(a_3 \tau_1^{(3)} \tau_1^{(5)}+a_0 \tau_1^{(4)} \tau_2^{(3)}\big)}{a_2 {a_3}^2 \tau_2^{(5)}},
 \label{eqns:A4_conditions_tau_1}\\
 \tau_2^{(2)}=\frac{a_1 a_2 \big(a_4 \tau_1^{(1)} \tau_1^{(4)}+a_1 \tau_1^{(5)} \tau_2^{(4)}\big)}{a_3 {a_4}^2 \tau_2^{(1)}},
 \label{eqns:A4_conditions_tau_2}\\
 \tau_2^{(4)}=\frac{a_3 a_4 \big(a_1 \tau_1^{(1)} \tau_1^{(3)}+a_3 \tau_1^{(2)} \tau_2^{(1)}\big)}{a_0 {a_1}^2 \tau_2^{(3)}},
 \label{eqns:A4_conditions_tau_3}
\end{gather}
\end{subequations}
and the following condition for the parameters
\begin{gather*}
 a_0a_1a_2a_3a_4=q.
\end{gather*}
The action of the transformation group $\langle s_0,s_1,s_2,s_3,s_4,\sigma,\iota\rangle$ on the parameters is given by
\begin{gather*}
 s_i(a_j)=a_j{a_i}^{-a_{ij}},\qquad \sigma(a_i)=a_{i+1},\\
 \iota\colon \ (a_0,a_1,a_2,a_3,a_4)\mapsto \big({a_0}^{-1},{a_4}^{-1},{a_3}^{-1},{a_2}^{-1},{a_1}^{-1}\big),
\end{gather*}
where $i,j\in\mathbb{Z}/5\mathbb{Z}$ and the symmetric $5\times 5$ matrix
\begin{gather*}
 (a_{ij})_{i,j=0}^4
 =\left(\begin{matrix}
 2&-1&0&0&-1\\-1&2&-1&0&0\\0&-1&2&-1&0\\0&0&-1&2&-1\\-1&0&0&-1&2\end{matrix}\right)
\end{gather*}
is the Cartan matrix of type $A_4^{(1)}$. Moreover, the action on the variables is given by
\begin{subequations}\label{eqns:A4_weylaction_tau}
\begin{gather}
 s_i\big(\tau_1^{(i+5)}\big)=\tau_2^{(i+4)},\qquad s_i\big(\tau_2^{(i+3)}\big)
 =\frac{a_{i+3} a_{i+4}\big(a_i a_{i+1} \tau_1^{(i+1)}\! \tau_1^{(i+3)}+a_{i+3} \tau_1^{(i+2)}\! \tau_2^{(i+1)}\big)}
 {{a_{i+1}}^2 \tau_1^{(i+5)}},\!\!\!\!\\
s_i\big(\tau_2^{(i+4)}\big)=\tau_1^{(i+5)},\qquad s_i\big(\tau_2^{(i+5)}\big)
 =\frac{a_{i+4}\big(a_{i+2} \tau_1^{(i+2)} \tau_1^{(i+4)}+a_i a_{i+4} \tau_1^{(i+3)} \tau_2^{(i+2)}\big)}
 {a_i a_{i+1} {a_{i+2}}^2 \tau_1^{(i+5)}},\\
 \sigma\big(\tau_1^{(i)}\big)=\tau_1^{(i+1)},\qquad \sigma\big(\tau_2^{(i)}\big)=\tau_2^{(i+1)},\\
\iota\colon \ \big(\tau_1^{(1)}\!,\tau_1^{(2)}\!,\tau_1^{(3)}\!,\tau_1^{(4)}\!,\tau_2^{(1)}\!,\tau_2^{(2)}\!,\tau_2^{(3)}\!,\tau_2^{(5)}\big)
 \mapsto\big(\tau_1^{(4)}\!,\tau_1^{(3)}\!,\tau_1^{(2)}\!,\tau_1^{(1)}\!,\tau_2^{(2)}\!,\tau_2^{(1)}\!,\tau_2^{(5)}\!,\tau_2^{(3)}\big),
\end{gather}
\end{subequations}
where $i\in\mathbb{Z}/5\mathbb{Z}$. In general, for a function $F=F\big(a_i,\tau_j^{(k)}\big)$, we let an element
$w\in\widetilde{W}\big(A_4^{(1)}\big)$ act as $w.F=F\big(w.a_i,w.\tau_j^{(k)}\big)$, that is, $w$ acts on the arguments from the left.
Note that $q=a_0a_1a_2a_3a_4$ is invariant under the action of $\langle s_0,s_1,s_2,s_3,s_4,\sigma\rangle$.

\begin{Proposition}[\cite{JNS:paper5,TsudaT2006:MR2207047}]
The group of birational transformations $\langle s_0,s_1,s_2,s_3,s_4,\sigma,\iota\rangle$,
denoted by $\widetilde{W}\big(A_4^{(1)}\big)$, forms the extended affine Weyl group of type $A_4^{(1)}$.
Namely, the transformations satisfy the fundamental relations
\begin{gather*}
 {s_i}^2=1,\qquad (s_is_{i\pm 1})^3=1,\qquad (s_is_j)^2=1,\qquad j\neq i\pm 1,\\
 \sigma^5=1,\qquad \sigma s_i=s_{i+1}\sigma,\qquad \iota^2=1,\qquad \iota s_0=s_0\iota,\qquad
 \iota s_1=s_4\iota,\qquad \iota s_2=s_3\iota,
\end{gather*}
where $i,j\in\mathbb{Z}/5\mathbb{Z}$.
\end{Proposition}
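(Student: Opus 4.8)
The plan is to exploit the uniform, index‑shifted form of the defining formulas \eqref{eqns:A4_weylaction_tau} in order to reduce the assertion to a small number of explicit birational identities, and then to verify those by direct substitution modulo the constraints \eqref{eqns:A4_conditions_tau}. I would first dispose of the action on the parameters: since $s_i(a_j)=a_j{a_i}^{-a_{ij}}$, with $(a_{ij})$ the Cartan matrix of type $A_4^{(1)}$ and the $a_i$ in the role of multiplicative simple‑root variables, is the standard realisation of the affine Weyl group $W\big(A_4^{(1)}\big)$, all of ${s_i}^2=1$, $(s_is_{i\pm1})^3=1$, $(s_is_j)^2=1$ for $j\neq i\pm1$, $\sigma^5=1$, $\sigma s_i=s_{i+1}\sigma$ (with $\sigma$ realising the rotation automorphism of the Dynkin diagram) and $\iota^2=1$, $\iota s_0=s_0\iota$, $\iota s_1=s_4\iota$, $\iota s_2=s_3\iota$ (with $\iota$ realising the reflection automorphism) hold by the classical theory, and in any case can be read off directly from the displayed formulas. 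It then remains to establish the same relations for the induced birational action on the ten variables $\tau_i^{(j)}$. A preliminary task there is to check that each generator is well defined on the subvariety cut out by \eqref{eqns:A4_conditions_tau_1}--\eqref{eqns:A4_conditions_tau_3}: one applies each generator to those three relations, substitutes from \eqref{eqns:A4_weylaction_tau}, and confirms that the results again lie in the ideal they generate.

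Granting well‑definedness, I would collapse the list of $\tau$‑side relations using equivariance. The superscripts enter \eqref{eqns:A4_weylaction_tau} only through their residues modulo $5$, in a way that $\sigma$ shifts by $1$; hence conjugation of $s_i$ by $\sigma$ turns the formula for $s_i$ into that for $s_{i+1}$, so $\sigma s_i\sigma^{-1}=s_{i+1}$ for all $i$ follows at once from $\sigma^5=1$ and the defining formulas. Consequently ${s_i}^2=1$, $(s_is_{i+1})^3=1$ and $(s_is_j)^2=1$ hold for all admissible indices as soon as they hold for one representative, and $(s_is_{i-1})^3=\big((s_{i-1}s_i)^3\big)^{-1}$ disposes of the other sign once ${s_i}^2=1$ is known. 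In the same spirit, $\iota$ reverses the index‑shift, so $\iota s_i\iota=s_{-i}$ for all $i$ follows once $\iota^2=1$, $\iota s_0\iota=s_0$ and $\iota s_1\iota=s_4$ are checked (the last also confirming the index‑reversal; $\iota s_2\iota=s_3$ can otherwise be verified in the same way). Thus, on the $\tau$ variables, it suffices to verify $\sigma^5=1$ (immediate from $\sigma\big(\tau_k^{(i)}\big)=\tau_k^{(i+1)}$), ${s_0}^2=1$, the braid relation $(s_0s_1)^3=1$, one commuting relation $(s_0s_2)^2=1$, and the three $\iota$‑relations above, each being a finite computation obtained by iterating the rational substitutions \eqref{eqns:A4_weylaction_tau} and simplifying modulo \eqref{eqns:A4_conditions_tau}.

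The main obstacle will be the braid relation $(s_0s_1)^3=1$ on the $\tau$ variables. Here one composes the two birational maps $s_0$ and $s_1$ three times; each of them replaces several of the $\tau_2^{(j)}$ by rational expressions in the remaining variables, the two maps act nontrivially on overlapping sets of variables, and some of the variables produced along the way are pinned down only after invoking \eqref{eqns:A4_conditions_tau}. Consequently the simplification cannot be carried out in a free rational function field but must be done modulo those three quadratic relations, and it is this bookkeeping, rather than any single algebraic step, that is delicate. The remaining identities ${s_0}^2=1$, $(s_0s_2)^2=1$ and the $\iota$‑relations are shorter instances of the same mechanism, and the $\sigma$‑relations are routine. (One could try to shorten the longest computation by first showing that the $\tau$‑side relations hold up to a common multiplicative gauge factor, forced to be trivial by a single normalisation once the parameter relations are in hand; but the direct verification just described is the route I would actually follow.)
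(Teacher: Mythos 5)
The paper does not actually prove this Proposition: it is imported with the citations \cite{JNS:paper5,TsudaT2006:MR2207047}, so there is no internal proof to compare against. Your plan --- read the parameter-side relations off the standard multiplicative Cartan-matrix realisation, check that each generator preserves the ideal generated by \eqref{eqns:A4_conditions_tau}, use $\sigma$-conjugation to cut the $\tau$-side relations down to the representatives ${s_0}^2=1$, $(s_0s_1)^3=1$, $(s_0s_2)^2=1$, and verify those plus the $\iota$-relations by direct rational computation modulo \eqref{eqns:A4_conditions_tau} --- is exactly the direct-verification route taken in the cited references, and the reduction logic is sound: the defining formulae \eqref{eqns:A4_weylaction_tau} are uniform in $i$, so $\sigma s_i\sigma^{-1}=s_{i+1}$ is read off by inspection, and all adjacent (respectively non-adjacent) pairs of indices form a single $\sigma$-orbit. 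The one loose step is the assertion that $\iota s_2\iota=s_3$ ``follows'' once $\iota^2=1$, $\iota s_0\iota=s_0$ and $\iota s_1\iota=s_4$ are known; it does not, unless you additionally establish $\iota\sigma\iota=\sigma^{-1}$ on both parameters and $\tau$ variables (which, combined with $\sigma s_i\sigma^{-1}=s_{i+1}$, yields $\iota s_i\iota=s_{-i}$ for all $i$). Since you hedge by offering to verify $\iota s_2\iota=s_3$ directly, and since that is a finite computation of the same kind as the others, the proposal stands as a correct outline of the standard proof.
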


To iterate each variable $\tau_i^{(j)}$, we need the translations $T_i$, $i=0,\dots,4$, def\/ined by
\begin{subequations}\label{eqn:A4_translations}
\begin{gather}
 T_0=\sigma s_4s_3s_2s_1,\qquad T_1=\sigma s_0s_4s_3s_2,\qquad T_2=\sigma s_1s_0s_4s_3,\qquad T_3=\sigma s_2s_1s_0s_4,\\
 T_4=\sigma s_3s_2s_1s_0.
\end{gather}
\end{subequations}
The action of translations on the parameters is given by
\begin{gather*}
 T_i(a_i)=qa_i,\qquad T_i(a_{i+1})=q^{-1}a_{i+1},
\end{gather*}
where $i\in\bbZ/5\bbZ$.
Note that $T_i$, $i=0,\dots,4$, commute with each other and
\begin{gather*}
 T_0T_1T_2T_3T_4=1.
\end{gather*}
We def\/ine $\tau$ functions by
\begin{gather}\label{eqn:A4_tau_l_0123}
 \tau_{l_1}^{l_0,l_2,l_3}={T_0}^{l_0}{T_1}^{l_1}{T_2}^{l_2}{T_3}^{l_3}\big(\tau_2^{(3)}\big),
\end{gather}
where $l_i\in\bbZ$.
We note that
\begin{subequations}\label{eqns:A4_config_tau}
\begin{gather}
 \tau_1^{(1)}=\tau_0^{1,0,1},\qquad
 \tau_1^{(2)}=\tau_1^{1,0,1},\qquad
 \tau_1^{(3)}=\tau_1^{1,1,1},\qquad
 \tau_1^{(4)}=\tau_1^{1,1,2},\qquad
 \tau_1^{(5)}=\tau_0^{0,0,1},\\
 \tau_2^{(1)}=\tau_0^{1,1,1},\qquad
 \tau_2^{(2)}=\tau_1^{1,0,2},\qquad
 \tau_2^{(3)}=\tau_0^{0,0,0},\qquad
 \tau_2^{(4)}=\tau_1^{2,1,2},\qquad
 \tau_2^{(5)}=\tau_1^{0,0,1}.
\end{gather}
\end{subequations}

\subsection [Hypergeometric $\tau$ functions]{Hypergeometric $\boldsymbol{\tau}$ functions}\label{subsection:HGtau_A4}

The aim of this section is to construct the hypergeometric $\tau$ functions of $\widetilde{W}\big(A_4^{(1)}\big)$-type.

Hereinafter, we consider the $\tau$ functions $\tau_{l_1}^{l_0,l_2,l_3}$ satisfying the following conditions:
\begin{enumerate}\itemsep=0pt
\item[(i)]
$\tau_{l_1}^{l_0,l_2,l_3}$ satisfy the action of the translation subgroup of $\widetilde{W}\big(A_4^{(1)}\big)$, $\langle T_0,T_1,T_2,T_3,T_4\rangle$;
\item[(ii)]
$\tau_{l_1}^{l_0,l_2,l_3}$ are functions in $a_0$, $a_2$ and $a_4$ consistent with
the action of $\langle T_0,T_2,T_3\rangle$, i.e.,
$ \tau_{l_1}^{l_0,l_2,l_3}=\tau_{l_1}\big(q^{l_0}a_0,q^{l_2}a_2,q^{-l_3}a_4\big)$;
\item[(iii)]
$\tau_{l_1}^{l_0,l_2,l_3}$ satisfy the following boundary conditions:
\begin{gather}\label{eqn:A4_boundary_tau}
 \tau_{l_1}^{l_0,l_2,l_3}=0,
\end{gather}
for $l_1<0$;
\end{enumerate}
under the conditions of parameters
\begin{gather}\label{eqn:condition_para}
 a_0a_1=q.
\end{gather}
We here call such functions $\tau_{l_1}^{l_0,l_2,l_3}$ hypergeometric $\tau$ functions of $\widetilde{W}\big(A_4^{(1)}\big)$-type.

From the condition (i), every $\tau_{l_1}^{l_0,l_2,l_3}$ can be given by
a rational function of ten variab\-les~$\tau_i^{(j)}$ (or, $\big\{\tau_0^{l_0,l_2,l_3}\big\}_{l_i\in\bbZ}$ and $\big\{\tau_1^{l_0,l_2,l_3}\big\}_{l_i\in\bbZ}$).
Therefore, our purpose in this section is to obtain the explicit formulae for $\big\{\tau_0^{l_0,l_2,l_3}\big\}_{l_i\in\bbZ}$ and $\big\{\tau_1^{l_0,l_2,l_3}\big\}_{l_i\in\bbZ}$,
satisfying the condition~(ii) under the condition~(iii) and construct the closed-form expressions of $\big\{\tau_{l_1}^{l_0,l_2,l_3}\big\}_{l_i\in\bbZ,\,l_1\geq2}$.

{\bf Step 1.}
Begin by preparing the equations necessary for the construction of the hypergeometric $\tau$ functions of $\widetilde{W}\big(A_4^{(1)}\big)$-type.
From the actions~\eqref{eqns:A4_weylaction_tau} and the def\/initions~\eqref{eqn:A4_translations},
the actions of~ $T_0$,~$T_2$ and~$T_3$ and their inverses on ten variables $\tau_i^{(j)}$ are given by the following
\begin{subequations}\label{eqns:action_T0_tau}
\begin{gather}
 T_0\big(\tau_1^{(4)}\big)=\tau_2^{(4)},\qquad T_0\big(\tau_1^{(5)}\big)=\tau_1^{(1)},\qquad T_0\big(\tau_2^{(5)}\big)=\tau_1^{(2)},\nonumber\\
 T_2\big(\tau_1^{(1)}\big)=\tau_2^{(1)},\qquad T_2\big(\tau_1^{(2)}\big)=\tau_1^{(3)},\qquad T_2\big(\tau_2^{(2)}\big)=\tau_1^{(4)},\nonumber\\
 T_3\big(\tau_1^{(2)}\big)=\tau_2^{(2)},\qquad T_3\big(\tau_1^{(3)}\big)=\tau_1^{(4)},\qquad T_3\big(\tau_2^{(3)}\big)=\tau_1^{(5)},\nonumber\\
T_0\big(\tau_1^{(1)}\big)
 =\frac{q{a_0}^2 a_4 \big(a_3 \tau_1^{(1)} T_0\big(\tau_1^{(3)}\big)+a_0 a_1 \tau_2^{(4)} T_0\big(\tau_2^{(3)}\big)\big)}
 {a_3\tau_1^{(3)}},\label{eqn:action_T0_tau_1}\\
T_0\big(\tau_1^{(2)}\big)
 =\frac{a_0 a_1 \big(qa_0 \tau_2^{(4)} T_0\big(\tau_2^{(3)}\big)+a_2 a_3 \tau_1^{(1)} T_0\big(\tau_1^{(3)}\big)\big)}
 {{a_3}^2 \tau_2^{(1)}},\label{eqn:action_T0_tau_2}\\
T_0\big(\tau_1^{(3)}\big)
 =\frac{a_3 a_4 \big(a_0 a_1 \tau_1^{(1)} \tau_1^{(3)}+a_3 \tau_1^{(2)} \tau_2^{(1)}\big)}
 {{a_1}^2 \tau_1^{(5)}},\label{eqn:action_T0_tau_3}\\
T_0\big(\tau_2^{(1)}\big)
 =\frac{a_0 a_1 \big(qa_0 \tau_2^{(4)} T_0\big(\tau_2^{(3)}\big)+a_3 \tau_1^{(1)} T_0\big(\tau_1^{(3)}\big)\big)}
 {a_2 {a_3}^2 \tau_1^{(2)}},\label{eqn:action_T0_tau_4}\\
T_0\big(\tau_2^{(2)}\big)
 =\frac{a_1 a_2 \big(q^{-1}a_1 \tau_1^{(1)} T_0\big(\tau_2^{(4)}\big)+a_4 \tau_2^{(4)} T_0\big(\tau_1^{(1)}\big)\big)}
 {qa_3 {a_4}^2 T_0\big(\tau_2^{(1)}\big)},\label{eqn:action_T0_tau_5}\\
T_0\big(\tau_2^{(3)}\big)
 =\frac{a_3 \big(a_1 \tau_1^{(1)} \tau_1^{(3)}+a_3 a_4 \tau_1^{(2)} \tau_2^{(1)}\big)}
 {a_0 {a_1}^2 a_4 \tau_1^{(4)}},\label{eqn:action_T0_tau_6}\\
T_0\big(\tau_2^{(4)}\big)
 =\frac{a_3 a_4 \big(a_1 T_0\big(\tau_1^{(1)}\big) T_0\big(\tau_1^{(3)}\big)+qa_3 T_0\big(\tau_1^{(2)}\big) T_0\big(\tau_2^{(1)}\big)\big)}
 {a_0 {a_1}^2 T_0\big(\tau_2^{(3)}\big)},\label{eqn:action_T0_tau_7}\\
{T_0}^{-1}\big(\tau_1^{(3)}\big)
 =\frac{a_0 \big(a_3 \tau_1^{(3)} \tau_1^{(5)}+a_0 a_1 \tau_1^{(4)} \tau_2^{(3)}\big)}
 {a_1 a_2 {a_3}^2 \tau_1^{(1)}},\label{eqn:action_T0_tau_8}\\
{T_0}^{-1}\big(\tau_1^{(4)}\big)
 =\frac{a_3 \big(qa_1 \tau_1^{(5)} {T_0}^{-1}\big(\tau_1^{(3)}\big)+a_3 a_4 \tau_2^{(5)} {T_0}^{-1}\big(\tau_2^{(1)}\big)\big)}
 {qa_0 {a_1}^2 a_4 \tau_2^{(3)}},\label{eqn:action_T0_tau_9}\\
{T_0}^{-1}\big(\tau_1^{(5)}\big)
 =\frac{a_3 a_4 \big(a_0 a_1 \tau_1^{(5)} {T_0}^{-1}\big(\tau_1^{(3)}\big)+a_3 \tau_2^{(5)} {T_0}^{-1}\big(\tau_2^{(1)}\big)\big)}
 {q^2{a_1}^2 \tau_1^{(3)}},\label{eqn:action_T0_tau_10}\\
{T_0}^{-1}\big(\tau_2^{(1)}\big)
 =\frac{a_0 a_1 \big(a_2 a_3 \tau_1^{(3)} \tau_1^{(5)}+a_0 \tau_1^{(4)} \tau_2^{(3)}\big)}{{a_3}^2 \tau_1^{(2)}},\label{eqn:action_T0_tau_11}\\
{T_0}^{-1}\big(\tau_2^{(2)}\big)
 =\frac{qa_1 a_2 \big(qa_1 \tau_1^{(4)} {T_0}^{-1}\big(\tau_1^{(5)}\big)+a_4 \tau_1^{(5)} {T_0}^{-1}\big(\tau_1^{(4)}\big)\big)}
 {a_3 {a_4}^2 {T_0}^{-1}\big(\tau_2^{(1)}\big)},\label{eqn:action_T0_tau_12}\\
{T_0}^{-1}\big(\tau_2^{(3)}\big)
 =\frac{a_3 a_4 \big(qa_1 \tau_1^{(5)} {T_0}^{-1}\big(\tau_1^{(3)}\big)+a_3 \tau_2^{(5)} {T_0}^{-1}\big(\tau_2^{(1)}\big)\big)}
 {qa_0 {a_1}^2 \tau_1^{(4)}},\label{eqn:action_T0_tau_13}\\
{T_0}^{-1}\big(\tau_2^{(5)}\big)
 =\frac{a_0 a_1 \big(q^{-1}a_0 {T_0}^{-1}\big(\tau_1^{(4)}\big) {T_0}^{-1}\big(\tau_2^{(3)}\big)+a_3 {T_0}^{-1}\big(\tau_1^{(3)}\big) {T_0}^{-1}\big(\tau_1^{(5)}\big)\big)}
 {a_2 {a_3}^2 {T_0}^{-1}\big(\tau_2^{(1)}\big)},\label{eqn:action_T0_tau_14}
\end{gather}
\end{subequations}
\vspace{-9mm}
\begin{subequations}\label{eqns:action_T2_tau}
\begin{gather}
T_2\big(\tau_1^{(3)}\big)
 =\frac{q{a_2}^2 a_1 \big(a_0 \tau_1^{(3)} T_2\big(\tau_1^{(5)}\big)+a_2 a_3 \tau_2^{(1)} T_2\big(\tau_2^{(5)}\big)\big)}
 {a_0\tau_1^{(5)}},\label{eqn:action_T2_tau_1}\\
T_2\big(\tau_1^{(4)}\big)
 =\frac{a_2 a_3 \big(qa_2\tau_2^{(1)} T_2\big(\tau_2^{(5)}\big)+a_4 a_0 \tau_1^{(3)} T_2\big(\tau_1^{(5)}\big)\big)}
 {{a_0}^2 \tau_2^{(3)}},\label{eqn:action_T2_tau_2}\\
T_2\big(\tau_1^{(5)}\big)
 =\frac{a_0 a_1 \big(a_2 a_3 \tau_1^{(3)} \tau_1^{(5)}+a_0 \tau_1^{(4)} \tau_2^{(3)}\big)}
 {{a_3}^2 \tau_1^{(2)}},\label{eqn:action_T2_tau_3}\\
T_2\big(\tau_2^{(3)}\big)
 =\frac{a_2 a_3 \big(qa_2 \tau_2^{(1)} T_2\big(\tau_2^{(5)}\big)+a_0 \tau_1^{(3)} T_2\big(\tau_1^{(5)}\big)\big)}
 {a_4 {a_0}^2 \tau_1^{(4)}},\label{eqn:action_T2_tau_4}\\
T_2\big(\tau_2^{(4)}\big)
 =\frac{a_3 a_4 \big(q^{-1}a_3 \tau_1^{(3)} T_2\big(\tau_2^{(1)}\big)+a_1 \tau_2^{(1)} T_2\big(\tau_1^{(3)}\big)\big)}
 {qa_0 {a_1}^2 T_2\big(\tau_2^{(3)}\big)},\label{eqn:action_T2_tau_5}\\
T_2\big(\tau_2^{(5)}\big)
 =\frac{a_0 \big(a_3 \tau_1^{(3)} \tau_1^{(5)}+a_0 a_1 \tau_1^{(4)} \tau_2^{(3)}\big)}
 {a_2 {a_3}^2 a_1 \tau_1^{(1)}},\label{eqn:action_T2_tau_6}\\
T_2\big(\tau_2^{(1)}\big)
 =\frac{a_0 a_1 \big(a_3 T_2\big(\tau_1^{(3)}\big) T_2\big(\tau_1^{(5)}\big)+qa_0 T_2\big(\tau_1^{(4)}\big) T_2\big(\tau_2^{(3)}\big)\big)}
 {a_2 {a_3}^2 T_2\big(\tau_2^{(5)}\big)},\label{eqn:action_T2_tau_7}\\
{T_2}^{-1}\big(\tau_1^{(5)}\big)
 =\frac{a_2 \big(a_0 \tau_1^{(5)} \tau_1^{(2)}+a_2 a_3 \tau_1^{(1)} \tau_2^{(5)}\big)}
 {a_3 a_4 {a_0}^2 \tau_1^{(3)}},\label{eqn:action_T2_tau_8}\\
{T_2}^{-1}\big(\tau_1^{(1)}\big)
 =\frac{a_0 \big(qa_3 \tau_1^{(2)} {T_2}^{-1}\big(\tau_1^{(5)}\big)+a_0 a_1 \tau_2^{(2)} {T_2}^{-1}\big(\tau_2^{(3)}\big)\big)}
 {qa_2 {a_3}^2 a_1 \tau_2^{(5)}},\label{eqn:action_T2_tau_9}\\
{T_2}^{-1}\big(\tau_1^{(2)}\big)
 =\frac{a_0 a_1 \big(a_2a_3 \tau_1^{(2)} {T_2}^{-1}\big(\tau_1^{(5)}\big)+a_0 \tau_2^{(2)} {T_2}^{-1}\big(\tau_2^{(3)}\big)\big)}
 {q^2{a_3}^2 \tau_1^{(5)}},\label{eqn:action_T2_tau_10}\\
{T_2}^{-1}\big(\tau_2^{(3)}\big)
 =\frac{a_2 a_3 \big(a_4 a_0 \tau_1^{(5)} \tau_1^{(2)}+a_2 \tau_1^{(1)} \tau_2^{(5)}\big)}{{a_0}^2 \tau_1^{(4)}},\label{eqn:action_T2_tau_11}\\
{T_2}^{-1}\big(\tau_2^{(4)}\big)
 =\frac{qa_3 a_4 \big(qa_3 \tau_1^{(1)} {T_2}^{-1}\big(\tau_1^{(2)}\big)+a_1 \tau_1^{(2)} {T_2}^{-1}\big(\tau_1^{(1)}\big)\big)}
 {a_0 {a_1}^2 {T_2}^{-1}\big(\tau_2^{(3)}\big)},\label{eqn:action_T2_tau_12}\\
{T_2}^{-1}\big(\tau_2^{(5)}\big)
 =\frac{a_0 a_1 \big(qa_3 \tau_1^{(2)} {T_2}^{-1}\big(\tau_1^{(5)}\big)+a_0 \tau_2^{(2)} {T_2}^{-1}\big(\tau_2^{(3)}\big)\big)}
 {qa_2 {a_3}^2 \tau_1^{(1)}},\label{eqn:action_T2_tau_13}\\
{T_2}^{-1}(\tau_2^{(2)})
 =\frac{a_2 a_3 \big(q^{-1}a_2 {T_2}^{-1}\big(\tau_1^{(1)}\big) {T_2}^{-1}\big(\tau_2^{(5)}\big)+a_0 {T_2}^{-1}\big(\tau_1^{(5)}\big) {T_2}^{-1}\big(\tau_1^{(2)}\big)\big)} {a_4 {a_0}^2 {T_2}^{-1}\big(\tau_2^{(3)}\big)},\label{eqn:action_T2_tau_14}
\end{gather}
\end{subequations}
\vspace{-8mm}
\begin{subequations}\label{eqns:action_T3_tau}
\begin{gather}
T_3\big(\tau_1^{(4)}\big)
 =\frac{q{a_3}^2 a_2 \big(a_1 \tau_1^{(4)} T_3\big(\tau_1^{(1)}\big)+a_3 a_4 \tau_2^{(2)} T_3\big(\tau_2^{(1)}\big)\big)}
 {a_1\tau_1^{(1)}},\label{eqn:action_T3_tau_1}\\
T_3\big(\tau_1^{(5)}\big)
 =\frac{a_3 a_4 \big(qa_3\tau_2^{(2)} T_3\big(\tau_2^{(1)}\big)+a_0 a_1 \tau_1^{(4)} T_3\big(\tau_1^{(1)}\big)\big)}
 {{a_1}^2 \tau_2^{(4)}},\label{eqn:action_T3_tau_2}\\
T_3\big(\tau_1^{(1)}\big)
 =\frac{a_1 a_2 \big(a_3 a_4 \tau_1^{(4)} \tau_1^{(1)}+a_1 \tau_1^{(5)} \tau_2^{(4)}\big)}
 {{a_4}^2 \tau_1^{(3)}},\label{eqn:action_T3_tau_3}\\
T_3\big(\tau_2^{(4)}\big)
 =\frac{a_3 a_4 \big(qa_3 \tau_2^{(2)} T_3\big(\tau_2^{(1)}\big)+a_1 \tau_1^{(4)} T_3\big(\tau_1^{(1)}\big)\big)}
 {a_0 {a_1}^2 \tau_1^{(5)}},\label{eqn:action_T3_tau_4}\\
T_3\big(\tau_2^{(5)}\big)
 =\frac{a_4 a_0 \big(q^{-1}a_4 \tau_1^{(4)} T_3(\tau_2^{(2)})+a_2 \tau_2^{(2)} T_3\big(\tau_1^{(4)}\big)\big)}
 {qa_1 {a_2}^2 T_3\big(\tau_2^{(4)}\big)},\label{eqn:action_T3_tau_5}\\
T_3\big(\tau_2^{(1)}\big)
 =\frac{a_1 \big(a_4 \tau_1^{(4)} \tau_1^{(1)}+a_1 a_2 \tau_1^{(5)} \tau_2^{(4)}\big)}
 {a_3 {a_4}^2 a_2 \tau_1^{(2)}},\label{eqn:action_T3_tau_6}\\
T_3(\tau_2^{(2)})
 =\frac{a_1 a_2 \big(a_4 T_3\big(\tau_1^{(4)}\big) T_3\big(\tau_1^{(1)}\big)+qa_1 T_3\big(\tau_1^{(5)}\big) T_3\big(\tau_2^{(4)}\big)\big)}
 {a_3 {a_4}^2 T_3\big(\tau_2^{(1)}\big)},\label{eqn:action_T3_tau_7}\\
{T_3}^{-1}\big(\tau_1^{(1)}\big)
 =\frac{a_3 \big(a_1 \tau_1^{(1)} \tau_1^{(3)}+a_3 a_4 \tau_1^{(2)} \tau_2^{(1)}\big)}
 {a_4 a_0 {a_1}^2 \tau_1^{(4)}},\label{eqn:action_T3_tau_8}\\
{T_3}^{-1}\big(\tau_1^{(2)}\big)
 =\frac{a_1 \big(qa_4 \tau_1^{(3)} {T_3}^{-1}\big(\tau_1^{(1)}\big)+a_1 a_2 \tau_2^{(3)} {T_3}^{-1}\big(\tau_2^{(4)}\big)\big)}
 {qa_3 {a_4}^2 a_2 \tau_2^{(1)}},\label{eqn:action_T3_tau_9}\\
{T_3}^{-1}\big(\tau_1^{(3)}\big)
 =\frac{a_1 a_2 \big(a_3a_4 \tau_1^{(3)} {T_3}^{-1}\big(\tau_1^{(1)}\big)+a_1 \tau_2^{(3)} {T_3}^{-1}\big(\tau_2^{(4)}\big)\big)}
 {q^2{a_4}^2 \tau_1^{(1)}},\label{eqn:action_T3_tau_10}\\
{T_3}^{-1}\big(\tau_2^{(4)}\big)
 =\frac{a_3 a_4 \big(a_0 a_1 \tau_1^{(1)} \tau_1^{(3)}+a_3 \tau_1^{(2)} \tau_2^{(1)}\big)}{{a_1}^2 \tau_1^{(5)}},\label{eqn:action_T3_tau_11}\\
{T_3}^{-1}\big(\tau_2^{(5)}\big)
 =\frac{qa_4 a_0 \big(qa_4 \tau_1^{(2)} {T_3}^{-1}\big(\tau_1^{(3)}\big)+a_2 \tau_1^{(3)} {T_3}^{-1}\big(\tau_1^{(2)}\big)\big)}
 {a_1 {a_2}^2 {T_3}^{-1}\big(\tau_2^{(4)}\big)},\label{eqn:action_T3_tau_12}\\
{T_3}^{-1}\big(\tau_2^{(1)}\big)
 =\frac{a_1 a_2 \big(qa_4 \tau_1^{(3)} {T_3}^{-1}\big(\tau_1^{(1)}\big)+a_1 \tau_2^{(3)} {T_3}^{-1}\big(\tau_2^{(4)}\big)\big)}
 {qa_3 {a_4}^2 \tau_1^{(2)}},\label{eqn:action_T3_tau_13}\\
{T_3}^{-1}\big(\tau_2^{(3)}\big)
 =\frac{a_3 a_4 \big(q^{-1}a_3 {T_3}^{-1}\big(\tau_1^{(2)}\big) {T_3}^{-1}\big(\tau_2^{(1)}\big)+a_1 {T_3}^{-1}\big(\tau_1^{(1)}\big) {T_3}^{-1}\big(\tau_1^{(3)}\big)\big)}
 {a_0 {a_1}^2 {T_3}^{-1}\big(\tau_2^{(4)}\big)}.\label{eqn:action_T3_tau_14}
\end{gather}
\end{subequations}
Moreover, by using the action of $T_1$, we obtain the following lemma.

\begin{Lemma}
The following discrete Toda type bilinear equations hold
\begin{subequations}\label{eqns:A4_1d_toda}
\begin{gather}
 \tau_{l_1+1}^{l_0,l_2,l_3}\tau_{l_1-1}^{l_0,l_2,l_3}
 = q^{3l_1-l_2-l_3}\frac{a_0a_1}{{a_2}^2a_3}\big({-}1+q^{-l_0+l_1}a_1\big)\big(\tau_{l_1}^{l_0,l_2,l_3}\big)^2\notag\\
 \hphantom{\tau_{l_1+1}^{l_0,l_2,l_3}\tau_{l_1-1}^{l_0,l_2,l_3}=}{} +q^{4(-l_0+l_1)}{a_1}^4\tau_{l_1}^{l_0+1,l_2,l_3}\tau_{l_1}^{l_0-1,l_2,l_3},\label{eqn:A4_1d_toda_T0}\\
 \tau_{l_1+1}^{l_0,l_2,l_3}\tau_{l_1-1}^{l_0,l_2,l_3}
 = q^{-l_0+4l_1-l_2-l_3}\frac{a_0{a_1}^2}{{a_2}^2a_3}\big(1-q^{-l_1+l_2}a_2\big)\big(\tau_{l_1}^{l_0,l_2,l_3}\big)^2\notag\\
\hphantom{\tau_{l_1+1}^{l_0,l_2,l_3}\tau_{l_1-1}^{l_0,l_2,l_3}=}{}
+q^{4(l_1-l_2)}{a_2}^{-4}\tau_{l_1}^{l_0,l_2+1,l_3}\tau_{l_1}^{l_0,l_2-1,l_3},\label{eqn:A4_1d_toda_T2}\\
\tau_{l_1+1}^{l_0,l_2,l_3}\tau_{l_1-1}^{l_0,l_2,l_3}
 =q^{-l_0+3l_1-l_2}\frac{a_1}{{a_2}^2a_3a_4}\big({-}1+q^{l_1-l_3}a_0a_1a_4\big)\big(\tau_{l_1}^{l_0,l_2,l_3}\big)^2\notag\\
\hphantom{\tau_{l_1+1}^{l_0,l_2,l_3}\tau_{l_1-1}^{l_0,l_2,l_3}=}{} +q^{4(l_1-l_3)}{a_0}^4{a_1}^4{a_4}^4\tau_{l_1}^{l_0,l_2,l_3+1}\tau_{l_1}^{l_0,l_2,l_3-1}.\label{eqn:A4_1d_toda_T3}
\end{gather}
\end{subequations}
\end{Lemma}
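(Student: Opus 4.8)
The plan is to prove all three identities at the base point $l_0=l_1=l_2=l_3=0$ and then propagate them to arbitrary $l_i$ by applying a single translation. Write $U=T_0^{l_0}T_1^{l_1}T_2^{l_2}T_3^{l_3}$. Since the $T_i$ commute and $\tau_{l_1}^{l_0,l_2,l_3}=U\big(\tau_2^{(3)}\big)$, the birational map $U$ sends $\tau_2^{(3)}\mapsto\tau_{l_1}^{l_0,l_2,l_3}$, $T_1^{\pm1}\big(\tau_2^{(3)}\big)\mapsto\tau_{l_1\pm1}^{l_0,l_2,l_3}$, $T_0^{\pm1}\big(\tau_2^{(3)}\big)\mapsto\tau_{l_1}^{l_0\pm1,l_2,l_3}$, and similarly in the $T_2$- and $T_3$-directions. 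On the parameters, $T_i(a_i)=qa_i$, $T_i(a_{i+1})=q^{-1}a_{i+1}$ together with the invariance of $q$ give $U(a_0)=q^{l_0}a_0$, $U(a_1)=q^{l_1-l_0}a_1$, $U(a_2)=q^{l_2-l_1}a_2$, $U(a_3)=q^{l_3-l_2}a_3$, $U(a_4)=q^{-l_3}a_4$. Substituting these into the $l_i=0$ form of \eqref{eqn:A4_1d_toda_T0} reproduces exactly the prefactors $q^{3l_1-l_2-l_3}\frac{a_0a_1}{{a_2}^2a_3}$, $q^{4(-l_0+l_1)}{a_1}^4$ and the shifted argument $-1+q^{-l_0+l_1}a_1$, and the same short check works for \eqref{eqn:A4_1d_toda_T2} and \eqref{eqn:A4_1d_toda_T3}. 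Hence it suffices to establish the three identities at the origin.

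At the origin every ingredient except $\tau_{\pm1}^{0,0,0}$ is already written out: $\tau_0^{0,0,0}=\tau_2^{(3)}$, while the $T_0^{\pm1}$-, $T_2^{\pm1}$- and $T_3^{\pm1}$-images of $\tau_2^{(3)}$ are \eqref{eqn:action_T0_tau_6}, \eqref{eqn:action_T0_tau_13}, \eqref{eqn:action_T2_tau_4}, \eqref{eqn:action_T2_tau_11}, and $T_3\big(\tau_2^{(3)}\big)=\tau_1^{(5)}$ with \eqref{eqn:action_T3_tau_14}. For the two remaining pieces I would use the identity $\sigma T_i\sigma^{-1}=T_{i+1}$ (indices mod $5$), which follows immediately from $\sigma s_j=s_{j+1}\sigma$ and the definitions \eqref{eqn:A4_translations}. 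Then $T_1^{\pm1}=\sigma T_0^{\pm1}\sigma^{-1}$ and $\sigma^{-1}\big(\tau_2^{(3)}\big)=\tau_2^{(2)}$, so
\begin{gather*}
 \tau_1^{0,0,0}=\sigma\big(T_0\big(\tau_2^{(2)}\big)\big),\qquad \tau_{-1}^{0,0,0}=\sigma\big({T_0}^{-1}\big(\tau_2^{(2)}\big)\big).
\end{gather*}
Here $T_0\big(\tau_2^{(2)}\big)$ and ${T_0}^{-1}\big(\tau_2^{(2)}\big)$ are given by \eqref{eqn:action_T0_tau_5} and \eqref{eqn:action_T0_tau_12}; unwinding the nested $T_0^{\pm1}$-entries through \eqref{eqn:action_T0_tau_1}--\eqref{eqn:action_T0_tau_4}, \eqref{eqn:action_T0_tau_6}, \eqref{eqn:action_T0_tau_7} and \eqref{eqn:action_T0_tau_8}--\eqref{eqn:action_T0_tau_11} turns both into explicit rational functions of the ten variables $\tau_i^{(j)}$, and applying $\sigma$ (which merely sends $\tau_i^{(j)}\mapsto\tau_i^{(j+1)}$ and $a_i\mapsto a_{i+1}$) yields $\tau_{\pm1}^{0,0,0}$ in the same variables.

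It then remains to substitute all of these expressions into the three origin identities, clear denominators, and verify that the resulting polynomial identities in the $\tau_i^{(j)}$ and $a_i$ hold as a consequence of the defining relations \eqref{eqns:A4_conditions_tau} (using, where convenient, their images under $\sigma$ and under $T_0^{\pm1}$). This last step is the real work: the product $\tau_1^{0,0,0}\tau_{-1}^{0,0,0}$ and the auxiliary $T_0^{\pm1}$-, $T_2^{\pm1}$- and $T_3^{\pm1}$-terms are built from deeply nested rational expressions, and one must organise the elimination via \eqref{eqns:A4_conditions_tau} so that the generic denominators cancel and the numerators collapse onto the compact three-term bilinear form — the emergence of the linear factors $-1+a_1$, $1-a_2$ and $-1+a_0a_1a_4$ being the signal that the reduction has been carried out correctly. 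I expect this bookkeeping, together with the accurate tracking of the $a_i$-exponents and of the $(-1+\cdots)$ versus $(1-\cdots)$ signs through the $\sigma$-conjugation, to be the main obstacle.
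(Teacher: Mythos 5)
Your reduction to a single lattice point is sound: the translation covariance argument (conjugating by $U={T_0}^{l_0}{T_1}^{l_1}{T_2}^{l_2}{T_3}^{l_3}$ and tracking $U(a_i)$) correctly shows that each identity in \eqref{eqns:A4_1d_toda} follows from its instance at one base point, and this matches what the paper does implicitly when it proves the identities at $\tau_1^{(1)}=\tau_0^{1,0,1}$ and declares them ``equivalent to'' the general equations. The problem is that everything after that is a plan rather than a proof. The entire content of the lemma is the algebraic fact that the nested rational expressions for $\tau_{\pm1}$ and the shifted $\tau_0$'s collapse onto a three-term bilinear form with the specific linear factors $-1+a_1$, $1-a_2$, $-1+a_0a_1a_4$; you defer exactly this step (``It then remains to substitute\dots and verify\dots This last step is the real work'') and explicitly flag it as the main obstacle without carrying it out or explaining why the cancellation must occur. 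A blind substitute-and-clear-denominators attack on $\tau_1^{0,0,0}\tau_{-1}^{0,0,0}$, with $\tau_{\pm1}^{0,0,0}=\sigma\big({T_0}^{\pm1}\big(\tau_2^{(2)}\big)\big)$ unwound through four or five layers of the formulas \eqref{eqns:action_T0_tau}, produces a rational identity on the seven-dimensional variety cut out by \eqref{eqns:A4_conditions_tau}, and nothing in your write-up shows how to organise the elimination so that it terminates.

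For comparison, the paper makes the base-point computation tractable by a specific structural choice that your plan misses: it computes the three one-step images $T_0\big(\tau_1^{(1)}\big)$, ${T_1}^{-1}\big(\tau_1^{(1)}\big)$, ${T_2}^{-1}\big(\tau_1^{(1)}\big)$ (equations \eqref{eqn:A4_T0tau11}--\eqref{eqn:A4_mT2tau11}) and writes each as a sum of exactly two terms built from the \emph{same} bilinear blocks, e.g.\ $a_0a_1\tau_1^{(1)}\tau_1^{(3)}+a_3\tau_1^{(2)}\tau_2^{(1)}$ and $a_3a_4\tau_1^{(1)}\tau_1^{(4)}+a_1\tau_1^{(5)}\tau_2^{(4)}$. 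Pairwise elimination of these shared blocks then yields \eqref{eqn:A4_1d_toda_T0} and \eqref{eqn:A4_1d_toda_T2} with no need to descend to a polynomial identity in all ten $\tau_i^{(j)}$, and the third equation \eqref{eqn:A4_1d_toda_T3} is obtained for free by eliminating the common product $\tau_1^{(2)}{T_1}^{-1}\big(\tau_1^{(1)}\big)$ from the first two results and applying $\sigma$ to \eqref{eqn:A4_Toda_proof3}, rather than by a third independent brute-force verification. To close the gap in your argument you would either need to actually exhibit the cancellation you promise, or restructure the base-point computation along these lines so that the bilinear form emerges by elimination of shared sub-expressions rather than by an unmanaged expansion.
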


\begin{proof}
The actions of $T_0$, ${T_1}^{-1}$ and ${T_2}^{-1}$ on $\tau_1^{(1)}$ are given by
\begin{gather}
 T_0\big(\tau_1^{(1)}\big)
 = \frac{q{a_0}^2a_3{a_4}^2\tau_1^{(1)}\big(a_0a_1\tau_1^{(1)}\tau_1^{(3)}+a_3\tau_1^{(2)}\tau_2^{(1)}\big)}{{a_1}^2\tau_1^{(3)}\tau_1^{(5)}}\notag \\
\hphantom{T_0\big(\tau_1^{(1)}\big)=}{} +\frac{q{a_0}^2\tau_2^{(4)}\big(a_1\tau_1^{(1)}\tau_1^{(3)}+a_3a_4\tau_1^{(2)}\tau_2^{(1)}\big)}{a_1\tau_1^{(3)}\tau_1^{(4)}},
 \label{eqn:A4_T0tau11}\\
 {T_1}^{-1}\big(\tau_1^{(1)}\big)
 =\frac{\tau_2^{(1)}\big(a_3a_4\tau_1^{(1)}\tau_1^{(4)}+a_1\tau_1^{(5)}\tau_2^{(4)}\big)}{q{a_2}^2a_3a_4\tau_1^{(3)}\tau_1^{(4)}}\notag\\
\hphantom{{T_1}^{-1}\big(\tau_1^{(1)}\big)=}{}
+\frac{a_1\tau_1^{(1)}\big(a_4\tau_1^{(1)}\tau_1^{(4)}+a_1a_2\tau_1^{(5)}\tau_2^{(4)}\big)}{q{a_2}^3{a_3}^2{a_4}^2\tau_1^{(2)}\tau_1^{(4)}},
 \label{eqn:A4_mT1tau11}\\
 {T_2}^{-1}\big(\tau_1^{(1)}\big)
 = \frac{{a_2}^2\tau_1^{(2)}\big(a_3a_4\tau_1^{(1)}\tau_1^{(4)}+a_1\tau_1^{(5)}\tau_2^{(4)}\big)}{qa_3a_4\tau_1^{(3)}\tau_1^{(4)}}\notag\\
\hphantom{{T_2}^{-1}\big(\tau_1^{(1)}\big)}{}
+\frac{a_1{a_2}^2\tau_1^{(1)}\big(a_4\tau_1^{(1)}\tau_1^{(4)}+a_1\tau_1^{(5)}\tau_2^{(4)}\big)}{q{a_3}^2{a_4}^2\tau_2^{(1)}\tau_1^{(4)}},
 \label{eqn:A4_mT2tau11}
\end{gather}
respectively.
Eliminating the terms $\tau_1^{(3)}$, $\tau_1^{(4)}$, $\tau_2^{(1)}$ and $\tau_2^{(4)}$
from equations \eqref{eqn:A4_T0tau11} and \eqref{eqn:A4_mT1tau11}, we obtain
\begin{gather}\label{eqn:A4_Toda_proof1}
 \tau_1^{(2)}{T_1}^{-1}\big(\tau_1^{(1)}\big)
 =q^{-1}\frac{a_0a_1}{{a_2}^2a_3}\big({-}1+q^{-1}a_1\big)\big(\tau_1^{(1)}\big)^2+q^{-4}{a_1}^4T_0\big(\tau_1^{(1)}\big)\tau_1^{(5)},
\end{gather}
which is equivalent to equation \eqref{eqn:A4_1d_toda_T0}.
Furthermore, eliminating the terms $\tau_1^{(3)}$, $\tau_1^{(4)}$, $\tau_1^{(5)}$ and~$\tau_2^{(4)}$
from equations \eqref{eqn:A4_mT1tau11} and \eqref{eqn:A4_mT2tau11}, we obtain
\begin{gather}\label{eqn:A4_Toda_proof2}
 \tau_1^{(2)}{T_1}^{-1}\big(\tau_1^{(1)}\big)
 =q^{-2}\frac{a_0{a_1}^2}{{a_2}^2a_3}(1-a_2)\big(\tau_1^{(1)}\big)^2+{a_2}^{-4}\tau_2^{(1)}{T_2}^{-1}\big(\tau_1^{(1)}\big),
\end{gather}
which is equivalent to equation \eqref{eqn:A4_1d_toda_T2}.
Eliminating the term $\tau_1^{(2)}{T_1}^{-1}\big(\tau_1^{(1)}\big)$ from equa\-tions \eqref{eqn:A4_Toda_proof1} and~\eqref{eqn:A4_Toda_proof2},
we obtain
\begin{gather}\label{eqn:A4_Toda_proof3}
 T_0\big(\tau_1^{(1)}\big)\tau_1^{(5)}
 =q^2\frac{a_0}{{a_1}^2a_2a_3}(-1+a_4a_0a_3)\big(\tau_1^{(1)}\big)^2+{a_4}^4{a_0}^4{a_3}^4\tau_2^{(1)}{T_2}^{-1}\big(\tau_1^{(1)}\big).
\end{gather}
Applying the transformation $\sigma$ on equation \eqref{eqn:A4_Toda_proof3}, we obtain
\begin{gather*}
 T_1\big(\tau_1^{(2)}\big)\tau_1^{(1)}
 =q^2\frac{a_1}{{a_2}^2a_3a_4}(-1+a_0a_1a_4)\big(\tau_1^{(2)}\big)^2+{a_0}^4{a_1}^4{a_4}^4\tau_2^{(2)}{T_3}^{-1}\tau_1^{(2)},
\end{gather*}
which is equivalent to equation \eqref{eqn:A4_1d_toda_T3}. Therefore, we have completed the proof.
\end{proof}

{\bf Step 2.} In this step, we get the explicit formulae for $\tau_0^{l_0,l_2,l_3}$ and $\tau_1^{l_0,l_2,l_3}$. Letting
\begin{gather}\label{eqn:tau1tau0H}
 \tau_1^{l_0,l_2,l_3}=\tau_0^{l_0,l_2,l_3}H_{l_0,l_2,l_3},
\end{gather}
where
\begin{gather*}
 H_{l_0,l_2,l_3}=H\big(q^{l_0}a_0,q^{l_2}a_2,q^{-l_3}a_4\big),
\end{gather*}
we obtain the following lemma.

\begin{Lemma}
A solution of the system of the equations~\eqref{eqns:A4_conditions_tau} and \eqref{eqns:action_T0_tau}--\eqref{eqns:action_T3_tau}
are given by the solution of the following system under the condition~\eqref{eqn:condition_para}:
\begin{subequations}\label{eqns:tau0tau0tau0tau0}
\begin{gather}
 \tau_0^{0,0,0} \tau_0^{0,1,1}+\frac{a_0 a_4}{q a_2}\,\tau_0^{0,1,0}\tau_0^{0,0,1}=0,\\
 \tau_0^{0,0,0} \tau_0^{1,1,1}-q {a_4}^2 \tau_0^{0,0,1} \tau_0^{1,1,0}=0,\\
 \tau_0^{0,0,0} \tau_0^{1,1,1}-\frac{1}{q {a_2}^2}\,\tau_0^{1,0,1} \tau_0^{0,1,0}=0,\\
 \tau_0^{0,0,0} \tau_0^{1,1,1}-\frac{q}{{a_0}^2}\,\tau_0^{0,1,1} \tau_0^{1,0,0}=0,\\
 \tau_0^{1,0,0} \tau_0^{-1,0,0}-\frac{{a_0}^4 a_4 (1-q{a_0}^{-1})}{q^3a_2}\big(\tau_0^{0,0,0}\big)^2=0,\\
 \tau_0^{0,1,0} \tau_0^{0,-1,0}+\frac{q^2{a_2}^3 a_4 (1-a_2)}{a_0}\big(\tau_0^{0,0,0}\big)^2=0,\\
 \tau_0^{0,0,1} \tau_0^{0,0,-1}-\frac{1-q a_4}{q^3 a_0 a_2 {a_4}^4}\big(\tau_0^{0,0,0}\big)^2=0,
\\
 H_{0,0,0}=q^2a_4 H_{1,1,0}+q(1-qa_4)H_{1,1,1},\tag{H01}\label{eqn:H_type1}\\
 H_{0,0,0}=-q^2a_2a_4 H_{0,1,0}+q^2a_4\big(1-q^{-1}a_0\big)H_{1,1,0},\tag{H02}\label{eqn:H_type2}\\
 H_{0,0,0}=-q^3{a_0}^{-1}a_2a_4 H_{0,1,0}-q^2{a_0}^{-1}\big(1-q^{-1}a_0\big)(1-qa_4)H_{1,1,1},\tag{H03}\label{eqn:H_type3}\\
 H_{0,0,0}=-a_0{a_2}^{-1} H_{1,0,0}-{a_2}^{-1}{a_4}^{-1}(1-qa_4)H_{1,0,1},\tag{H04}\label{eqn:H_type4}\\
 H_{0,0,0}=-q^2{a_0}^{-1}a_2 H_{0,1,0}-q{a_0}^{-1} {a_4}^{-1}(1-qa_4)H_{0,1,1},\tag{H05}\label{eqn:H_type5}\\
 H_{0,0,0}=-qa_0a_4 H_{1,0,0}+q^2a_4(1-a_2) H_{1,1,0},\tag{H06}\label{eqn:H_type6}\\
 H_{0,0,0}=-qa_0{a_2}^{-1}a_4 H_{1,0,0}-q{a_2}^{-1}(1-a_2)(1-qa_4)H_{1,1,1},\tag{H07}\label{eqn:H_type7}\\
 H_{0,0,0}=-{a_2}^{-1}{a_4}^{-1} H_{0,0,1}+q{a_2}^{-1}\big(1-q^{-1}a_0\big)H_{1,0,1},\tag{H08}\label{eqn:H_type8}\\
 H_{0,0,0}=q^2{a_0}^{-1} H_{0,1,1}-q^2{a_0}^{-1}\big(1-q^{-1}a_0\big) H_{1,1,1},\tag{H09}\label{eqn:H_type9}\\
 a_4(1-a_0)H_{2,1,1}H_{0,0,0}=qa_4 H_{1,1,0}H_{1,0,1}-a_0a_4 H_{1,0,0}H_{1,1,1}. \tag{H10}\label{eqn:H_type10}
\end{gather}
\end{subequations}
\end{Lemma}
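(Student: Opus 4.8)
The plan is to push every relation of the full system down to the two functions $\tau_0(a_0,a_2,a_4)$ and $H(a_0,a_2,a_4)$ and to read off its genuine content as either a pure $\tau_0$-identity or a pure $H$-identity. First I would use the configuration \eqref{eqns:A4_config_tau} together with condition (ii) to rewrite each of the ten variables $\tau_i^{(j)}$, and each of the shifted quantities occurring in \eqref{eqns:A4_conditions_tau} and in \eqref{eqns:action_T0_tau}--\eqref{eqns:action_T3_tau}, as a value $\tau_0\big(q^{l_0}a_0,q^{l_2}a_2,q^{-l_3}a_4\big)$ or $\tau_1\big(q^{l_0}a_0,q^{l_2}a_2,q^{-l_3}a_4\big)$; by condition (i) the lattice is generated by $\langle T_0,T_1,T_2,T_3\rangle$, and the values with $l_1\ge 2$ are already fixed by the Toda bilinear equations \eqref{eqns:A4_1d_toda} of the preceding Lemma, so this accounts for all of the unknowns. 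Substituting $\tau_1^{l_0,l_2,l_3}=\tau_0^{l_0,l_2,l_3}H_{l_0,l_2,l_3}$ from \eqref{eqn:tau1tau0H} and imposing \eqref{eqn:condition_para}, equivalently $a_2a_3a_4=1$, each relation becomes, after clearing denominators, an identity in which every monomial carries a product of two $\tau_0$-values; dividing through by such a product exposes the constraint it encodes.

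Carrying this out, the three relations among the $\tau_0^{l_0,l_2,l_3}$ in \eqref{eqns:tau0tau0tau0tau0} that shift a single index by $\pm1$ are exactly the $l_1=0$ specializations of \eqref{eqn:A4_1d_toda_T0}, \eqref{eqn:A4_1d_toda_T2} and \eqref{eqn:A4_1d_toda_T3}, using the boundary condition \eqref{eqn:A4_boundary_tau} to annihilate $\tau_{-1}^{l_0,l_2,l_3}$ and the relation $a_0a_1=q$ to collapse the coefficient; the remaining four $\tau_0$-relations in \eqref{eqns:tau0tau0tau0tau0} come out the same way from \eqref{eqns:A4_conditions_tau} and the index-shifting parts of \eqref{eqns:action_T0_tau}--\eqref{eqns:action_T3_tau}, after the $H$-factors have been eliminated against one another. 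With these seven $\tau_0$-identities, and their $T_i$-translates, in hand, each surviving equation of the full system can be reduced, after its $\tau_0$-factors are cancelled by means of those identities, to one of the ten linear relations \eqref{eqn:H_type1}--\eqref{eqn:H_type10} among the $H_{l_0,l_2,l_3}$; the last of these involves the value $H_{2,1,1}$ and so requires pushing the substitution one lattice step beyond the region needed for \eqref{eqn:H_type1}--\eqref{eqn:H_type9}. Since every step is an equivalence once $a_0a_1=q$ and $\tau_0\not\equiv 0$, the listed system conversely reproduces the full one, completing the argument.

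The main obstacle is organizational rather than conceptual: the full system consists of roughly fifty scalar relations, a large fraction of which are merely definitional and become tautologies under the substitution, while the genuine constraints overlap heavily, so the work lies in ordering the eliminations so that precisely the seven $\tau_0$-relations and ten $H$-relations emerge and everything else is seen to be a consequence, and in managing the cascade of $T_i$-shifted $\tau_0$-identities needed to clear denominators in the relations whose indices lie far from the origin. I would keep this within bounds by imitating the bookkeeping of the preceding Lemma, deriving the $T_2$- and $T_3$-versions of each relation from the $T_0$-version by applying the rotation $\sigma$, so that only a small number of independent computations is actually carried out.
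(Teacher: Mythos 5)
Your proposal follows essentially the same route as the paper: substitute $\tau_1^{l_0,l_2,l_3}=\tau_0^{l_0,l_2,l_3}H_{l_0,l_2,l_3}$, obtain the single-index $\tau_0$-relations from the Toda equations \eqref{eqns:A4_1d_toda} at $l_1=0$ with the boundary condition, extract the remaining $\tau_0$-relations by matching coefficients between equations sharing the same $H$-contiguity pattern (the paper's ``Type $1$--$10$'' classification), and then reduce each group to one of \eqref{eqn:H_type1}--\eqref{eqn:H_type10}. The only difference is bookkeeping --- you propose $\sigma$-rotation to cut down the casework where the paper tabulates all equations into ten types explicitly --- so this is the same proof.
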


\begin{proof}
By using notation \eqref{eqn:A4_tau_l_0123} and relation \eqref{eqn:tau1tau0H},
equations \eqref{eqns:A4_conditions_tau} and \eqref{eqns:action_T0_tau}--\eqref{eqns:action_T3_tau}
can be classif\/ied by the type of contiguity relations of function $H_{l_0,l_2,l_3}$ (see Figs.~\ref{fig:type1-2}--\ref{fig:type9-10}) as the following table:
\begin{center}
\begin{tabular}{|l||l|}
\hline
Type & Equation number \\ \hline\hline
Type 1&
\eqref{eqns:A4_conditions_tau_1},
\eqref{eqn:action_T0_tau_4},
\eqref{eqn:action_T0_tau_14},
\eqref{eqn:action_T2_tau_7},
\eqref{eqn:action_T2_tau_13}
\\ \hline
Type 2&
\eqref{eqns:A4_conditions_tau_2},
\eqref{eqn:action_T0_tau_3},
\eqref{eqn:action_T0_tau_5},
\eqref{eqn:action_T0_tau_10},
\eqref{eqn:action_T0_tau_12},
\eqref{eqn:action_T3_tau_2},
\eqref{eqn:action_T3_tau_7},
\eqref{eqn:action_T3_tau_11},
\eqref{eqn:action_T3_tau_13}
\\ \hline
Type 3&
\eqref{eqns:A4_conditions_tau_3},
\eqref{eqn:action_T0_tau_7},
\eqref{eqn:action_T0_tau_13},
\eqref{eqn:action_T2_tau_5},
\eqref{eqn:action_T2_tau_12},
\eqref{eqn:action_T3_tau_4},
\eqref{eqn:action_T3_tau_14}
\\ \hline
Type 4&
\eqref{eqn:action_T0_tau_1},
\eqref{eqn:action_T0_tau_8},
\eqref{eqn:action_T2_tau_4},
\eqref{eqn:action_T2_tau_6},
\eqref{eqn:action_T2_tau_9},
\eqref{eqn:action_T2_tau_14}
\\ \hline
Type 5&
\eqref{eqn:action_T0_tau_2},
\eqref{eqn:action_T0_tau_6},
\eqref{eqn:action_T0_tau_9},
\eqref{eqn:action_T0_tau_11},
\eqref{eqn:action_T2_tau_3},
\eqref{eqn:action_T2_tau_10},
\eqref{eqn:action_T3_tau_1},
\eqref{eqn:action_T3_tau_8}
\\ \hline
Type 6&
\eqref{eqn:action_T2_tau_1},
\eqref{eqn:action_T2_tau_8}
\\ \hline
Type 7&
\eqref{eqn:action_T2_tau_2},
\eqref{eqn:action_T2_tau_11}
\\ \hline
Type 8&
\eqref{eqn:action_T3_tau_3},
\eqref{eqn:action_T3_tau_10}
\\ \hline
Type 9&
\eqref{eqn:action_T3_tau_6},
\eqref{eqn:action_T3_tau_9}
\\ \hline
Type 10&
\eqref{eqn:action_T3_tau_5},
\eqref{eqn:action_T3_tau_12}
\\ \hline\hline
\end{tabular}
\end{center}
Under the condition \eqref{eqn:condition_para}, comparing the coef\/f\/icients of $H_{l_0,l_2,l_3}$ in the same types,
for example \eqref{eqns:A4_conditions_tau_1}$\,\equiv\,$\eqref{eqn:action_T0_tau_4},
and substituting the boundary condition~\eqref{eqn:A4_boundary_tau} in equations \eqref{eqns:A4_1d_toda} with $l_1=0$,
we obtain equations~\eqref{eqns:tau0tau0tau0tau0}. Moreover, by using the relations~\eqref{eqns:tau0tau0tau0tau0},
the equations of Type~1,~\dots, Type~10 are given by equations \eqref{eqn:H_type1}--\eqref{eqn:H_type10}, respectively.
Therefore, we have completed the proof.
\end{proof}

We can easily verify the following lemma by the direct calculation.

\begin{Lemma}
A solution of system~\eqref{eqns:tau0tau0tau0tau0} is given by
\begin{gather*}
 \tau_0^{l_0,l_2,l_3}=\big(q^{l_0}a_0;q,q\big)_\infty\big(q^{l_2+1}a_2;q,q\big)_\infty\big(q^{l_3}{a_4}^{-1};q,q\big)_\infty K_{l_0,l_2,l_3},
\end{gather*}
where
\begin{gather}\label{eqn:tau0_K}
 K_{l_0,l_2,l_3}\\
 =\frac{\big(\Ga\big(q^{l_0+l_2+1}a_0a_2;q,q\big)\Ga\big(q^{l_2+l_3+1}a_2{a_4}^{-1};q,q\big)\Ga\big(q^{l_3+l_0}{a_4}^{-1}a_0;q,q\big)\big)^2}
 {\Ga\big(q^{l_0+l_2+l_3+1}a_0a_2{a_4}^{-1};q,q\big)\big(\Ga\big(q^{l_0+1/3}a_0;q,q\big)
 \Ga\big(q^{l_2+4/3}a_2;q,q\big)\Ga\big(q^{l_3+1/3}{a_4}^{-1};q,q\big)\big)^6}.\nonumber
\end{gather}
\end{Lemma}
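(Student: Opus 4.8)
Since the ten linear relations among the $H_{l_0,l_2,l_3}$ in \eqref{eqns:tau0tau0tau0tau0} do not involve the $\tau_0^{l_0,l_2,l_3}$, the content of the lemma is that the displayed $\tau_0^{l_0,l_2,l_3}$ satisfy the seven bilinear relations among them, and this is what the plan is to verify directly. The structural remark that makes this short is that each of the seven equations is homogeneous of degree two, that its two quadratic terms carry multi-indices $(l_0,l_2,l_3)$ with equal componentwise sum, and that the proposed formula factors \emph{separably}:
\begin{gather*}
 \tau_0^{l_0,l_2,l_3}=f_0(l_0)\,f_2(l_2)\,f_4(l_3)\,g_{02}(l_0+l_2)\,g_{24}(l_2+l_3)\,g_{40}(l_3+l_0)\,h(l_0+l_2+l_3),
\end{gather*}
where, abbreviating $\Ga(x)=\Ga(x;q,q)$,
\begin{gather*}
 f_0(m)=\big(q^{m}a_0;q,q\big)_\infty\,\Ga\big(q^{m+1/3}a_0\big)^{-6},\qquad
 g_{02}(m)=\Ga\big(q^{m+1}a_0a_2\big)^2,\\
 f_2(m)=\big(q^{m+1}a_2;q,q\big)_\infty\,\Ga\big(q^{m+4/3}a_2\big)^{-6},\qquad
 g_{24}(m)=\Ga\big(q^{m+1}a_2{a_4}^{-1}\big)^2,\\
 f_4(m)=\big(q^{m}{a_4}^{-1};q,q\big)_\infty\,\Ga\big(q^{m+1/3}{a_4}^{-1}\big)^{-6},\qquad
 g_{40}(m)=\Ga\big(q^{m}{a_4}^{-1}a_0\big)^2,\\
 h(m)=\Ga\big(q^{m+1}a_0a_2{a_4}^{-1}\big)^{-1}.
\end{gather*}

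The next step is to compute, for each of these seven one-variable factors $\phi$, its bilinear ratio $\phi(m+1)\phi(m-1)/\phi(m)^2$. For $g_{02}$, $g_{24}$, $g_{40}$ and $h$ this follows from $\Ga(q^na;q,q)/\Ga(a;q,q)=\prod_{i=0}^{n-1}\Theta(q^ia;q)$ together with $\Theta(q^na;q)/\Theta(a;q)=(-1)^nq^{-n(n-1)/2}a^{-n}$, and each ratio collapses to a pure monomial in $q,a_0,a_2,a_4$, for instance $g_{02}(m+1)g_{02}(m-1)/g_{02}(m)^2=q^{-2m}(a_0a_2)^{-2}$ and $h(m+1)h(m-1)/h(m)^2=-q^m a_0a_2{a_4}^{-1}$. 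For $f_0$, $f_2$ and $f_4$ one uses in addition $(q^na;q,q)_\infty/(a;q,q)_\infty=\prod_{i=0}^{n-1}(q^ia;q)_\infty^{-1}$: the $q$-shifted factorial part of the ratio for $f_0$ is $1-q^{m-1}a_0$ (and $1-q^{m}a_2$, $1-q^{m-1}{a_4}^{-1}$ for $f_2$, $f_4$), while the sixth power of the elliptic gamma part turns the $\frac{1}{3}$- and $\frac{4}{3}$-shifts harmless, since $6\cdot\frac{1}{3}$ and $6\cdot\frac{4}{3}$ are integers and no fractional power of $q$ can survive; one gets, e.g., $f_0(m+1)f_0(m-1)/f_0(m)^2=\big(1-q^{m-1}a_0\big)q^{6m-4}{a_0}^6$. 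Evaluated at $m=0$ these three factors supply exactly the $1-q{a_0}^{-1}$, $1-a_2$ and $1-qa_4$ occurring in the coefficients of \eqref{eqns:tau0tau0tau0tau0}.

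Finally, since every one of the seven equations is index-balanced, substituting the separable form, cancelling the factors whose argument agrees on the two quadratic terms, and inserting the relevant bilinear ratios reduces the equation to an identity between a product of monomials and the stated coefficient. For example, in the first relation only $g_{24}$ and $h$ are active (their arguments being $0,2$ on one side and $1,1$ on the other), so it becomes $\big(g_{24}(0)g_{24}(2)/g_{24}(1)^2\big)\big(h(0)h(2)/h(1)^2\big)=-a_0a_4/(qa_2)$; in the fifth relation only $l_0$ is shifted, so $f_0$, $g_{02}$, $g_{40}$ and $h$ are active and the equation becomes the product of their bilinear ratios at $m=0$, which equals ${a_0}^4a_4\big(1-q{a_0}^{-1}\big)/(q^3a_2)$; the other five are entirely analogous. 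Carrying out these seven one-line checks is the "direct calculation" asserted in the statement, and the only place where care is genuinely needed — and hence the main obstacle — is in correctly tracking the signs generated by $\Theta(q^na;q)/\Theta(a;q)$ and the powers of $q$, in particular verifying that the sixth-power elliptic gamma factors contribute only integer powers of~$q$.
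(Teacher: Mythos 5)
Your proposal is correct and is precisely the ``direct calculation'' the paper asserts but does not write out: the separable factorization of $\tau_0^{l_0,l_2,l_3}$ combined with the index-balance of each of the seven bilinear relations reduces everything to monomial identities, and the ratios you record (including the signs coming from $\Theta(qa;q)/\Theta(a;q)=-a^{-1}$ and the integrality of the $q$-powers from the sixth powers of the elliptic gamma factors) all check out. The only cosmetic imprecision is that in the relations involving $\tau_0^{0,0,0}\tau_0^{1,1,1}$ the factor $h$ carries arguments $\{0,3\}$ versus $\{1,2\}$, so its contribution is a product of two bilinear ratios rather than a single one, but it still telescopes to the monomial $q^3\big(a_0a_2{a_4}^{-1}\big)^2$ and the verification goes through.
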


\begin{figure}[th!]\centering
\includegraphics[width=0.8\textwidth]{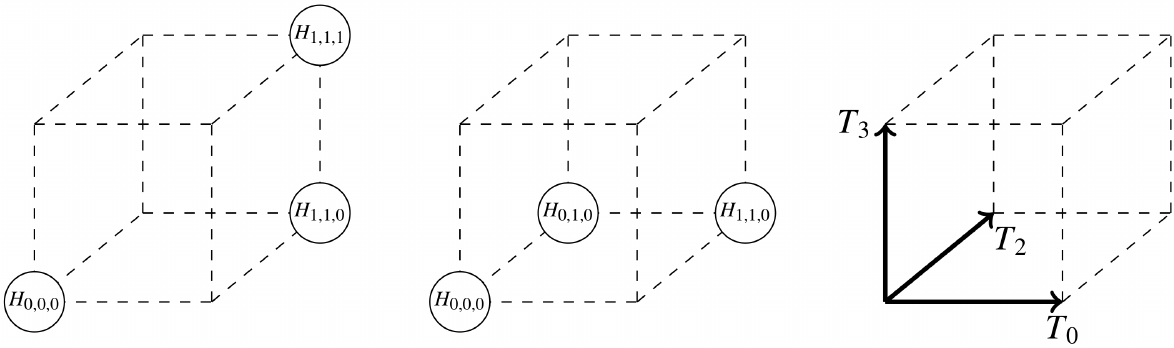}
\caption{Left: Type 1, center: Type 2, right: directions.}\label{fig:type1-2}
\end{figure}

\begin{figure}[ht!]\centering
\includegraphics[width=0.8\textwidth]{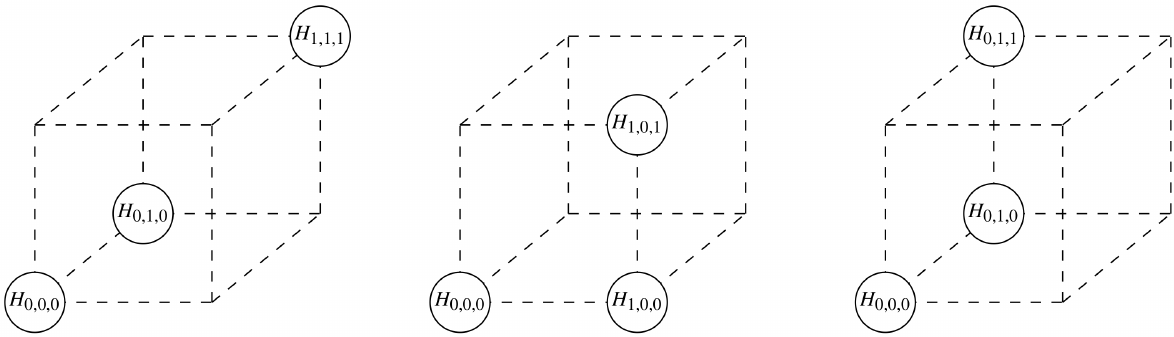}
\caption{Left: Type 3, center: Type 4, right: Type 5.}\label{fig:type3-5}
\end{figure}

\begin{figure}[ht!]\centering
\includegraphics[width=0.8\textwidth]{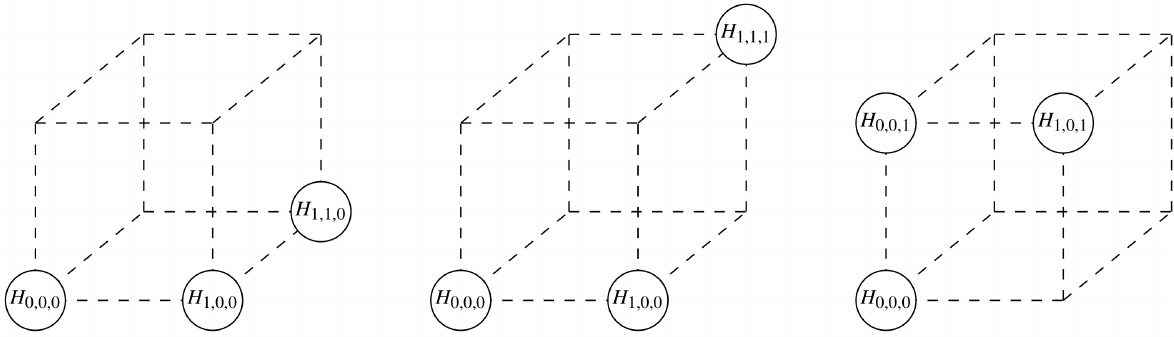}
\caption{Left: Type 6, center: Type 7, right: Type 8.}\label{fig:type6-8}
\end{figure}

\begin{figure}[ht!]\centering
\includegraphics[width=0.7\textwidth]{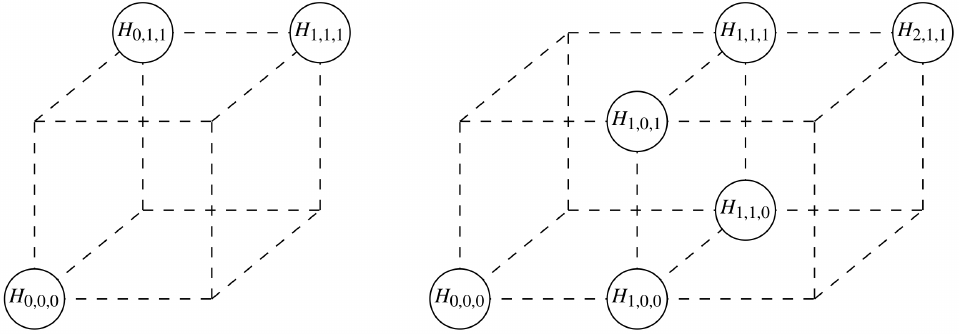}
\caption{Left: Type 9, right: Type 10.}\label{fig:type9-10}
\end{figure}

We consider a solution of system of the equations \eqref{eqn:H_type1}--\eqref{eqn:H_type10}.
First, we get the essential relations for the function $H_{l_0,l_2,l_3}$.

\begin{Lemma}
If the function $H_{l_0,l_2,l_3}$ satisfies equations \eqref{eqn:H_type2}, \eqref{eqn:H_type6} and \eqref{eqn:H_type8} and
the following three-term relation
\begin{gather}\label{eqn:H_3term}
 qa_0a_4(a_0-q)H_{1,0,0}+(a_0-qa_2+qa_0a_2a_4)H_{0,0,0}+qa_2 H_{-1,0,0}=0,\tag{H11}
\end{gather}
then it also satisfies equations \eqref{eqn:H_type1}, \eqref{eqn:H_type3}, \eqref{eqn:H_type4}, \eqref{eqn:H_type5}, \eqref{eqn:H_type7}, \eqref{eqn:H_type9} and \eqref{eqn:H_type10}.
\end{Lemma}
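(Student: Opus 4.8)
The key structural remark is condition~(ii): since $H_{l_0,l_2,l_3}=H\big(q^{l_0}a_0,q^{l_2}a_2,q^{-l_3}a_4\big)$, a relation among values of $H$ holding at the origin automatically holds at every lattice point, and, dually, the hypotheses \eqref{eqn:H_type2}, \eqref{eqn:H_type6}, \eqref{eqn:H_type8} and \eqref{eqn:H_3term} may be invoked at any shifted argument, i.e.\ with $a_0\mapsto q^{k_0}a_0$, $a_2\mapsto q^{k_2}a_2$, $a_4\mapsto q^{-k_3}a_4$. Moreover all of \eqref{eqn:H_type1}--\eqref{eqn:H_3term} are \emph{linear} in the relevant values of $H$, with coefficients rational in $a_0,a_2,a_4,q$, except for \eqref{eqn:H_type10}, which is bilinear. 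The plan is therefore to regard $H_{0,0,0}$ and $H_{1,0,0}$ as two free ``seeds'', to express each other value of $H$ occurring in \eqref{eqn:H_type1}, \eqref{eqn:H_type3}, \eqref{eqn:H_type4}, \eqref{eqn:H_type5}, \eqref{eqn:H_type7}, \eqref{eqn:H_type9} and \eqref{eqn:H_type10} as an explicit $\bbC\big(a_0,a_2,a_4,q\big)$-linear combination of these seeds using \eqref{eqn:H_type2}, \eqref{eqn:H_type6}, \eqref{eqn:H_type8}, \eqref{eqn:H_3term} and their shifts, and then to substitute.

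I would do the reduction in three stages. First, \eqref{eqn:H_type6} solves for $H_{1,1,0}$, and then \eqref{eqn:H_type2} solves for $H_{0,1,0}$ in terms of $H_{0,0,0}$ and $H_{1,1,0}$, so both are linear in the seeds. Second, I would take \eqref{eqn:H_type8} together with the $l_3\mapsto l_3+1$ shifts of \eqref{eqn:H_type6} and \eqref{eqn:H_type2} and the $l_2\mapsto l_2+1$ shift of \eqref{eqn:H_type8}: the first three of these involve $H_{0,0,1}$, and eliminating it produces two relations which, together with the fourth, form a closed linear system for $H_{1,1,1}$, $H_{1,0,1}$, $H_{0,1,1}$ with right-hand side linear in $H_{0,0,0}$ and $H_{0,1,0}$; solving it (one finds, e.g., $q(1-a_2)(1-qa_4)H_{1,1,1}=-\big(a_2H_{0,0,0}+qa_0a_4H_{1,0,0}\big)$) expresses all three in the seeds. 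Third, the $(1,1,1)$-shift of the three-term relation \eqref{eqn:H_3term} expresses $H_{2,1,1}$ through $H_{1,1,1}$ and $H_{0,1,1}$, hence again in the seeds.

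With every required value of $H$ written as $\alpha H_{0,0,0}+\beta H_{1,0,0}$, each of \eqref{eqn:H_type1}, \eqref{eqn:H_type3}, \eqref{eqn:H_type4}, \eqref{eqn:H_type5}, \eqref{eqn:H_type7}, \eqref{eqn:H_type9} becomes the assertion that the coefficients of $H_{0,0,0}$ and of $H_{1,0,0}$ both vanish, which are two rational-function identities in $a_0,a_2,a_4,q$ checked by direct computation; for instance \eqref{eqn:H_type7} is, after multiplication by $a_2^{-1}$, precisely the relation $q(1-a_2)(1-qa_4)H_{1,1,1}=-\big(a_2H_{0,0,0}+qa_0a_4H_{1,0,0}\big)$ found above. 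For the bilinear relation \eqref{eqn:H_type10}, substitution produces a quadratic expression in the seeds on each side; rewriting these in the coordinates $H_{0,0,0}$ and $P:=a_2H_{0,0,0}+qa_0a_4H_{1,0,0}$, the $P^2$ contributions cancel and the two sides are seen to coincide, leaving only $H_{0,0,0}^2$ and $H_{0,0,0}P$ terms whose coefficients match.

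The main obstacle is the algebra in the second and third stages. Unlike the six linear targets, \eqref{eqn:H_type10} genuinely needs the second-order recurrence \eqref{eqn:H_3term}, since it is the only relation that reaches the ``second-layer'' point $H_{2,1,1}$; and because \eqref{eqn:H_type10} is bilinear one has to carry the quadratic terms through the substitution and verify the cancellation of the $P^2$ (equivalently $H_{1,0,0}^2$) part. Arranging the chain of eliminations so that $H_{1,1,1}$, $H_{1,0,1}$, $H_{0,1,1}$ and finally $H_{2,1,1}$ all come out cleanly as combinations of the single pair $\big(H_{0,0,0},H_{1,0,0}\big)$ is where the real effort lies; the final substitutions are then mechanical.
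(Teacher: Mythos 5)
Your proposal is correct, and it rests on the same basic mechanism as the paper's proof: both exploit that, by condition (ii), each hypothesis may be invoked with $T_i$-shifted arguments, and both then proceed by linear elimination among these shifted copies. The organization, however, is genuinely different. The paper derives each target by a short, bespoke chain of pairwise eliminations (producing intermediate relations such as \eqref{eqn:H_proof_1} along the way, as depicted in Figs.~\ref{fig:H_type4}--\ref{fig:H_type9}), and --- crucially for the only bilinear target --- it never expands \eqref{eqn:H_type10} into a quadratic identity: it substitutes the already-established linear relations \eqref{eqn:H_type2}, \eqref{eqn:H_type4} and \eqref{eqn:H_type7} into \eqref{eqn:H_type10} so as to reduce it to the hypothesis \eqref{eqn:H_type6}. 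You instead normalize globally, expressing every lattice value occurring in the targets as a $\bbC(a_0,a_2,a_4,q)$-linear combination of the two seeds $H_{0,0,0}$ and $H_{1,0,0}$, and verify each target --- including the quadratic one --- by direct substitution. Your pivotal intermediate formula $q(1-a_2)(1-qa_4)H_{1,1,1}=-\big(a_2H_{0,0,0}+qa_0a_4H_{1,0,0}\big)$ is exactly \eqref{eqn:H_type7} rewritten, and it does follow from the linear system you set up: eliminating $H_{0,0,1}$ between \eqref{eqn:H_type8} and the $T_3$-shifts of \eqref{eqn:H_type6} and \eqref{eqn:H_type2}, and combining with the $T_2$-shift of \eqref{eqn:H_type8}, yields precisely this relation, with $H_{1,0,1}$ and $H_{0,1,1}$ coming out alongside. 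Your route buys systematicity --- it makes transparent that only the values $H_{0,1,0}$, $H_{1,1,0}$, $H_{0,0,1}$, $H_{1,0,1}$, $H_{0,1,1}$, $H_{1,1,1}$ and $H_{2,1,1}$ are in play, and that $H_{2,1,1}$ is reachable only through the shifted three-term relation \eqref{eqn:H_3term} --- at the cost of carrying the quadratic cancellation for \eqref{eqn:H_type10} explicitly, which the paper's reduction to \eqref{eqn:H_type6} avoids. Two minor points to make explicit in a final write-up: the divisions by leading coefficients such as $1-a_2$ and $1-qa_4$ require generic parameters (an assumption the paper also makes silently), and the $H_{1,0,0}^2$-cancellation in \eqref{eqn:H_type10} still needs to be displayed rather than asserted.
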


\begin{proof}
Equation \eqref{eqn:H_type4} can be obtained by using equations \eqref{eqn:H_type8} and \eqref{eqn:H_3term} as follows.
Erasing the term $H_{1,0,1}$ from equations \eqref{eqn:H_type8} and \eqref{eqn:H_3term}$_3$, we obtain
\begin{gather}\label{eqn:H_proof_1}
 a_0a_4H_{0,0,0}+(q-a_0a_4)H_{0,0,1}-q H_{-1,0,1}=0.\tag{H12}
\end{gather}
We note that a subscript $i$ of equation number means $T_i$-shifted corresponding equation.
Moreover, erasing the term $H_{0,0,1}$ from equations \eqref{eqn:H_proof_1}$_0$ and \eqref{eqn:H_type8},
we obtain equation \eqref{eqn:H_type4}.
This procedure is described in Fig.~\ref{fig:H_type4}.

In a similar manner, we can derive equations \eqref{eqn:H_type1}, \eqref{eqn:H_type3}, \eqref{eqn:H_type7},
\eqref{eqn:H_type5} and \eqref{eqn:H_type9} as shown in Figs.~\ref{fig:H_type1}--\ref{fig:H_type9}, respectively.
On the other hand, we can prove equation \eqref{eqn:H_type10} by reducing it to equation \eqref{eqn:H_type6}
with equations \eqref{eqn:H_type2}, \eqref{eqn:H_type4} and \eqref{eqn:H_type7}
as shown in Fig.~\ref{fig:H_type10}.
Therefore, we have completed the proof.
\end{proof}

\begin{figure}[th!]\centering
\includegraphics[width=0.8\textwidth]{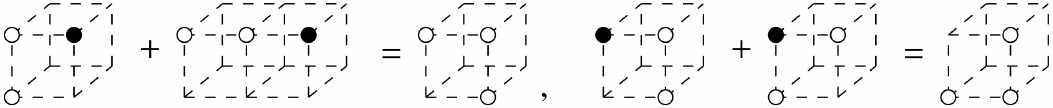}
\caption{Derivation of equation \eqref{eqn:H_type4}. The black points are removed.}
\label{fig:H_type4}
\end{figure}

\begin{figure}[th!]
\begin{minipage}{0.48\hsize}\centering
\includegraphics[width=0.8\textwidth]{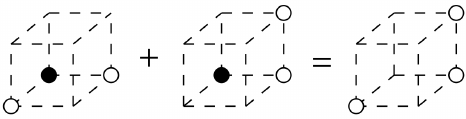}
\caption{Derivation of equation \eqref{eqn:H_type1}.}
\label{fig:H_type1}
\end{minipage}
\quad
\begin{minipage}{0.48\hsize}\centering
\includegraphics[width=0.8\textwidth]{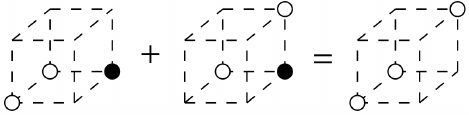}
\caption{Derivation of equation \eqref{eqn:H_type3}.}
\label{fig:H_type3}
\end{minipage}
\end{figure}

\begin{figure}[th!]
\begin{minipage}{0.48\hsize}\centering
\includegraphics[width=0.8\textwidth]{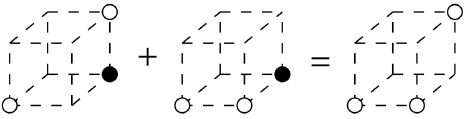}
\caption{Derivation of equation \eqref{eqn:H_type7}.}
\label{fig:H_type7}
\end{minipage}
\quad
\begin{minipage}{0.48\hsize}\centering
\includegraphics[width=0.8\textwidth]{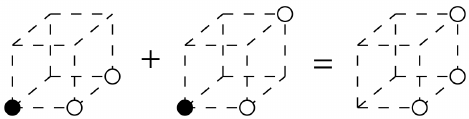}
\caption{Derivation of equation \eqref{eqn:H_type5}.}
\label{fig:H_type5}
\end{minipage}
\end{figure}

\begin{figure}[th!]\centering
\includegraphics[width=0.4\textwidth]{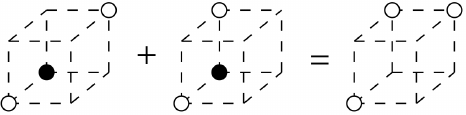}
\caption{Derivation of equation \eqref{eqn:H_type9}.}
\label{fig:H_type9}
\end{figure}

\begin{figure}[th!]\centering
\includegraphics[width=0.8\textwidth]{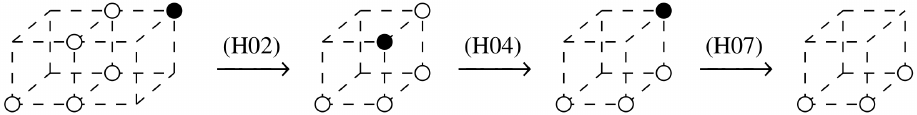}
\caption{Reduction from equation \eqref{eqn:H_type10} to equation \eqref{eqn:H_type6}.}
\label{fig:H_type10}
\end{figure}

Next, we solve the essential relations for the function $H_{l_0,l_2,l_3}$, that is,
equations \eqref{eqn:H_type2}, \eqref{eqn:H_type6}, \eqref{eqn:H_type8} and \eqref{eqn:H_3term}.

\begin{Lemma}
The solution of the system of equations \eqref{eqn:H_type2}, \eqref{eqn:H_type6}, \eqref{eqn:H_type8} and~\eqref{eqn:H_3term} are given by
\begin{gather*}
 H_{l_0,l_2,l_3}=q^{2/3}\left(q^{-l_0-l_2-l_3+2}\frac{a_4}{a_0a_2}\right)^{1/2}\big(q^{l_0-1}{a_0}^{-1};q\big)_\infty G_{l_0,l_2,l_3},
\end{gather*}
where
\begin{gather}
 G_{l_0,l_2,l_3} = C_1 q^{(-2l_2+l_3)/2}{a_2}^{-1}{a_4}^{-1/2}
 \frac{\big(q^{-l_2+l_3}{a_2}^{-1}{a_4}^{-1};q\big)_\infty}{\big(q^{-l_2+1}{a_2}^{-1};q\big)_\infty}
 \frac{\Theta\big(q^{-l_0+l_2-1/2}{a_0}^{-1}a_2;q\big)}{\Theta\big(q^{-l_0}{a_0}^{-1};q\big)}\notag\\
\hphantom{G_{l_0,l_2,l_3} =+{}}{} \times
 {}_1\varphi_1\left(\begin{matrix}q^{-l_2+1}{a_2}^{-1}\\q^{-l_2+l_3}{a_2}^{-1}{a_4}^{-1}\end{matrix};q,q^{-l_0+l_3}{a_0}^{-1}{a_4}^{-1}\right)\notag\\
\hphantom{G_{l_0,l_2,l_3} =}{}+C_2 q^{(l_2-2l_3)/2}{a_2}^{1/2}a_4
 \frac{\big(q^{l_2-l_3+2}a_2a_4;q\big)_\infty}{\big(q^{-l_3+2}a_4;q\big)_\infty}
 \frac{\Theta\big(q^{-l_0+l_3-3/2}{a_0}^{-1}{a_4}^{-1};q\big)}{\Theta\big(q^{-l_0}{a_0}^{-1};q\big)}\notag\\
\hphantom{G_{l_0,l_2,l_3} =+{}}{}\times
 {}_1\varphi_1\left(\begin{matrix}q^{-l_3+2}a_4\\q^{l_2-l_3+2}a_2a_4\end{matrix};q,q^{-l_0+l_2+1}{a_0}^{-1}a_2\right).\label{eqn:tau1_G}
\end{gather}
Here, $C_i=C_i(l_0,l_2,l_3)$, $i=1,2$, are periodic functions of period one for $l_0,l_2,l_3\in\bbZ$, i.e.,
\begin{gather*}
 C_i(l_0+1,l_2,l_3)=C_i(l_0,l_2+1,l_3)=C_i(l_0,l_2,l_3+1)=C_i(l_0,l_2,l_3).
\end{gather*}
\end{Lemma}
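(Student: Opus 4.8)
The plan is to solve the system \eqref{eqn:H_type2}, \eqref{eqn:H_type6}, \eqref{eqn:H_type8}, \eqref{eqn:H_3term} by first isolating the genuine second-order difference equation hidden in it, then recognizing that equation as a $q$-difference equation of basic hypergeometric type, and finally using the theta-function factors to absorb the ``gauge'' part of the answer so that what remains is a pair of ${}_1\varphi_1$ solutions. First I would regard $l_2,l_3$ as spectator variables: equations \eqref{eqn:H_type2} and \eqref{eqn:H_type8} are first-order in the $l_2$- and $l_3$-directions respectively (they each relate $H_{0,0,0}$, one neighbour in the $l_0$-direction, and one neighbour in a transverse direction), so they determine how $H$ depends on $l_2$ and $l_3$ once its dependence on $l_0$ is known. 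Concretely, from \eqref{eqn:H_type2}, \eqref{eqn:H_type6}, \eqref{eqn:H_type8} one can express $H_{0,1,0}$, $H_{1,0,0}$ (or $H_{1,1,0}$) and $H_{0,0,1}$ in terms of $H_{0,0,0}$ and $H_{-1,0,0}$ using also \eqref{eqn:H_3term}; substituting these back and shifting indices yields a closed three-term recurrence purely in the $l_0$-variable with coefficients that are rational in $q^{l_0}a_0$, $q^{l_2}a_2$, $q^{-l_3}a_4$. That recurrence is \eqref{eqn:H_3term} itself together with its $T_2$- and $T_3$-translates, and consistency of the three is what the previous lemma already guarantees.

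Next I would perform the substitution $H_{l_0,l_2,l_3}=q^{2/3}\bigl(q^{-l_0-l_2-l_3+2}a_4/(a_0a_2)\bigr)^{1/2}\big(q^{l_0-1}{a_0}^{-1};q\big)_\infty G_{l_0,l_2,l_3}$ announced in the statement; the role of this prefactor is to clear the inhomogeneous-looking terms and the $(a_0-q)$, $(1-qa_4)$ type factors so that $G$ satisfies a \emph{homogeneous, normalized} $q$-difference equation. The square-root and the single $q$-Pochhammer factor are exactly what is needed to turn the coefficient $qa_0a_4(a_0-q)$ in \eqref{eqn:H_3term} into something symmetric; I would check this by direct substitution, using $a_0a_1=q$ to simplify. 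After this normalization the $l_0$-recurrence for $G$ should be a three-term $q$-difference equation whose two independent solutions are, up to $q$-periodic factors, the two ${}_1\varphi_1$'s written in \eqref{eqn:tau1_G}: one obtained by the Frobenius-type solution normalized at the ``$a_2$'' characteristic exponent, the other at the ``$a_4^{-1}$'' one. The theta-quotients $\Theta(q^{-l_0+l_2-1/2}{a_0}^{-1}a_2;q)/\Theta(q^{-l_0}{a_0}^{-1};q)$ and its partner encode precisely these exponents and make each summand a genuine solution of the $l_0$-recurrence; the ratios of infinite products in front of the ${}_1\varphi_1$'s are then fixed by demanding compatibility with the first-order equations \eqref{eqn:H_type2} and \eqref{eqn:H_type8} in the $l_2$- and $l_3$-directions.

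The main obstacle, as usual with these $\tau$-function computations, is \emph{bookkeeping}: one must verify that the single scalar recurrence obtained by eliminating transverse neighbours is genuinely consistent in all three lattice directions, and that the ${}_1\varphi_1$ Ansatz — which a priori only solves the $l_0$-recurrence — also satisfies \eqref{eqn:H_type6} (the remaining mixed relation) once the prefactors are chosen as stated. I expect the cleanest route is to \emph{not} rederive the recurrence from scratch but to verify the claimed formula directly: substitute \eqref{eqn:tau1_G} into \eqref{eqn:H_type2}, \eqref{eqn:H_type6}, \eqref{eqn:H_type8}, \eqref{eqn:H_3term}, use the contiguous relations for ${}_1\varphi_1$ (three-term relations relating ${}_1\varphi_1$ with shifted parameters/argument) together with the theta-function quasi-periodicity $\Theta(qa;q)=-a^{-1}\Theta(a;q)$, and collect terms; each of the four equations should reduce to a known ${}_1\varphi_1$ contiguity identity. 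The appearance of two arbitrary $q$-periodic coefficients $C_1,C_2$ is then automatic, since the ${}_1\varphi_1$ contiguity relations and the theta-periodicity are all invariant under multiplying a full solution by a function of period one in each of $l_0,l_2,l_3$; the boundary/growth condition \eqref{eqn:A4_boundary_tau} will later be what pins these $C_i$ down, but that is not part of the present lemma.
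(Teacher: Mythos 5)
Your plan is essentially the paper's proof: the same gauge substitution turns \eqref{eqn:H_type2}, \eqref{eqn:H_type6}, \eqref{eqn:H_type8} and \eqref{eqn:H_3term} into a homogeneous system for $G$, the three-term relation in the $l_0$-direction is solved by a Frobenius-type series yielding exactly the two theta-quotient times ${}_1\varphi_1$ branches, and the transverse first-order relations, reduced via three ${}_1\varphi_1$ contiguity identities, give first-order recurrences that fix the $(a_2,a_4)$-dependent prefactors up to the periodic functions $C_1$, $C_2$. One caution: your suggested shortcut of verifying the closed form by direct substitution would only show it is \emph{a} solution, whereas the series derivation you sketch first is what justifies the lemma's claim that this is the \emph{general} solution, so keep that as the main line of argument.
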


\begin{proof}
By letting
\begin{gather*}
 H_{l_0,l_2,l_3}=q^{2/3}\left(q^{-l_0-l_2-l_3+2}\frac{t}{\al\be}\right)^{1/2}\big(q^{l_0-1}t^{-1};q\big)_\infty G\big(q^{-l_0}t,q^{l_2}\al,q^{l_3}\be\big),
\end{gather*}
where
\begin{gather*}
 t={a_0}^{-1},\qquad \al=a_2,\qquad \be={a_4}^{-1},
\end{gather*}
equations \eqref{eqn:H_type2}, \eqref{eqn:H_type6}, \eqref{eqn:H_type8} and \eqref{eqn:H_3term} can be rewritten as the following
\begin{gather}
 \be G(q t,\al ,\be)-q G(t,q\al,\be)+q^{3/2}\al G(q t,q \al,\be)=0, \label{eqn:proof_G_type2}\\
 \be \big(1-q^2 t\big) G(q t,\al,\be)+q^3(1-\al) t G(t,q\al,\be)-q^{3/2} G(t,\al,\be)=0, \label{eqn:proof_G_type6}\\
 q^{1/2}\al G(q t,\al,\be)-q^{1/2} G(t,\al,q\be)+\be G(q t,\al,q\be)=0, \label{eqn:proof_G_type8}\\
 \al \be \big(q^3 t-1\big) G\big(q^2 t,\al,\be\big)-\big(q^{5/2}\al \be t-q^{1/2} (q\al+\be)\big) G(q t,\al,\be)-q^2 G(t,\al,\be)=0, \label{eqn:proof_G_3term}
\end{gather}
respectively. Substituting
\begin{gather*}
 G(t,\al,\be)=\sum_{n=0}^\infty c_nt^{n+\rho},
\end{gather*}
where $c_n=c_n(\al,\be)$,
in equation \eqref{eqn:proof_G_3term}, we obtain
\begin{gather}
 G(t,\al,\be) =A(\al,\be)\frac{\Theta\big(q^{-1/2}\al t;q\big)}{\Theta(t;q)}\,{}_1\varphi_1\left(\begin{matrix}q\al^{-1}\\\al^{-1}\be\end{matrix};q,\be t\right)\notag\\
\hphantom{G(t,\al,\be) =}{} +B(\al,\be)\frac{\Theta\big(q^{-3/2}\be t;q\big)}{\Theta(t;q)}\,{}_1\varphi_1\left(\begin{matrix}q^2\be^{-1}\\q^2\al\be^{-1}\end{matrix};q,q \al t\right),\label{eqn:proof_G_solution}
\end{gather}
where $A(\al,\be)$ and $B(\al,\be)$ are arbitrary functions.
Moreover, substituting \eqref{eqn:proof_G_solution} in
equations \eqref{eqn:proof_G_type2}, \eqref{eqn:proof_G_type6} and \eqref{eqn:proof_G_type8}, we obtain the following relations
\begin{subequations}\label{eqns:rels_AB}
\begin{alignat}{3}
 &A(q\al,\be)=\frac{1-q^{-1}\al^{-1}\be}{q\big(1-\al^{-1}\big)}\,A(\al,\be),\qquad&&
 A(\al,q\be)=\frac{q^{1/2}}{1-\al^{-1}\be}\,A(\al,\be),&\\
 &B(q\al,\be)=\frac{q^{1/2}}{1-q^2\al\be^{-1}} B(\al,\be),\qquad&&
 B(\al,q\be)=\frac{1-q\al\be^{-1}}{q \big(1-q\be^{-1}\big)} B(\al,\be),&
\end{alignat}
\end{subequations}
which can be solved by
\begin{gather*}
 A(\al,\be)=\al^{-1}\be^{1/2}\frac{\big(\al^{-1}\be;q\big)_\infty}{\big(q\al^{-1};q\big)_\infty},\qquad
 B(\al,\be)=\al^{1/2}\be^{-1}\frac{\big(q^2\al\be^{-1};q\big)_\infty}{\big(q^2\be^{-1};q\big)_\infty}.
\end{gather*}
To obtain the relations \eqref{eqns:rels_AB}, we used the following recurrence relations of hypergeometric series~${}_1\varphi_1$:
\begin{gather*}
 {}_1\varphi_1\left(\begin{matrix}a\\b\end{matrix};q,z\right)
 -\frac{q+b-q^2 z}{q}\,{}_1\varphi_1\left(\begin{matrix}a\\b\end{matrix};q,qz\right)
 -\frac{q^2 a z-b}{q}\,{}_1\varphi_1\left(\begin{matrix}a\\b\end{matrix};q,q^2z\right)=0,\\
{}_1\varphi_1\left(\begin{matrix}a\\b\end{matrix};q,z\right)
 ={}_1\varphi_1\left(\begin{matrix}a\\b\end{matrix};q,qz\right)
 +\frac{(a-1) z}{1-b}\,{}_1\varphi_1\left(\begin{matrix}qa\\qb\end{matrix};q,qz\right),\\
{}_1\varphi_1\left(\begin{matrix}a\\b\end{matrix};q,z\right)
 =\frac{1}{1-b}\,{}_1\varphi_1\left(\begin{matrix}a\\qb\end{matrix};q,z\right)
 -\frac{b}{1-b}\,{}_1\varphi_1\left(\begin{matrix}a\\qb\end{matrix};q,qz\right),
\end{gather*}
which can be verif\/ied by the direct calculation. Therefore, we have completed the proof.
\end{proof}

{\bf Step 3.} In this f\/inal step, we give the hypergeometric $\tau$ functions of $\widetilde{W}\big(A_4^{(1)}\big)$-type.

Substituting
\begin{gather*}
 \tau_{l_1}^{l_0,l_2,l_3}
 =q^{2{l_1}^3/3}\left(q^{-l_0-l_2-l_3+2}\frac{a_4}{a_0a_2}\right)^{{l_1}^2/2}
 \frac{(q^{l_0-l_1}{a_0}^{-1};q,q)_\infty}{(q^{l_0}{a_0}^{-1};q,q)_\infty}
 \tau_0^{l_0,l_2,l_3} \Phi_{l_1}^{l_0,l_2,l_3},
\end{gather*}
in equation \eqref{eqn:A4_1d_toda_T0}, we obtain the following bilinear equation
\begin{gather}\label{eqn:Phi_bilinear}
 \Phi_{l_1+1}^{l_0,l_2,l_3}\Phi_{l_1-1}^{l_0,l_2,l_3}
 =(\Phi_{l_1}^{l_0,l_2,l_3})^2-\Phi_{l_1}^{l_0+1,l_2,l_3}\Phi_{l_1}^{l_0-1,l_2,l_3}.
\end{gather}
In general, equation \eqref{eqn:Phi_bilinear} admits a solution expressed in terms of Jacobi--Trudi type determinant
\begin{gather*}
 \Phi_{l_1}^{l_0,l_2,l_3}=\det(c_{l_0+i-j,l_2,l_3})_{i,j=1,\dots,l_1},
\end{gather*}
where $l_1\in\bbZ_{>1}$, under the boundary conditions
\begin{gather*}
 \Phi_{l_1}^{l_0,l_2,l_3}=0\qquad(l_1<0),\qquad
 \Phi_0^{l_0,l_2,l_3}=1,\qquad
 \Phi_1^{l_0,l_2,l_3}=c_{l_0,l_2,l_3},
\end{gather*}
where $c_{l_0,l_2,l_3}$ is an arbitrary function.
Therefore, we obtain the following theorem.

\begin{Theorem}\label{theorem:A4_hypertau}
The hypergeometric $\tau$ functions of $\widetilde{W}(A_4^{(1)})$-type are given by the following
\begin{gather*}
 \tau_0^{l_0,l_2,l_3}=\big(q^{l_0}a_0;q,q\big)_\infty\big(q^{l_2+1}a_2;q,q\big)_\infty\big(q^{l_3}{a_4}^{-1};q,q\big)_\infty K_{l_0,l_2,l_3},\\
 \tau_{l_1}^{l_0,l_2,l_3} =q^{2{l_1}^3/3}\left(q^{-l_0-l_2-l_3+2}\frac{a_4}{a_0a_2}\right)^{{l_1}^2/2}
 \frac{\big(q^{l_0-l_1}{a_0}^{-1};q,q\big)_\infty}{\big(q^{l_0}{a_0}^{-1};q,q\big)_\infty}
 \tau_0^{l_0,l_2,l_3} \Phi_{l_1}^{l_0,l_2,l_3},
\end{gather*}
where $l_0,l_2,l_3\in\bbZ$, $l_1\in\bbZ_{>0}$ and the functions $\big\{\Phi_{l_1}^{l_0,l_2,l_3}\big\}_{l_i\in\bbZ,\, l_1>0}$ are given by the following $l_1\times l_1$ determinants
\begin{gather}\label{eqn:def_Phi}
 \Phi_{l_1}^{l_0,l_2,l_3}
 =\begin{vmatrix}
 G_{l_0,l_2,l_3}&G_{l_0+1,l_2,l_3}&\cdots&G_{l_0+l_1-1,l_2,l_3}\\
 G_{l_0-1,l_2,l_3}&G_{l_0,l_2,l_3}&\cdots&G_{l_0+l_1-2,l_2,l_3}\\
 \vdots&\vdots&\cdots&\vdots\\
 G_{l_0-l_1+1,l_2,l_3}&G_{l_0-l_1+2,l_2,l_3}&\cdots&G_{l_0,l_2,l_3}
 \end{vmatrix}.
\end{gather}
Here, the functions $K_{l_0,l_2,l_3}$ and $G_{l_0,l_2,l_3}$ are given in equations~\eqref{eqn:tau0_K} and~\eqref{eqn:tau1_G}, respectively.
\end{Theorem}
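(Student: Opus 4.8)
The plan is to glue together the preceding lemmas. The formula for $\tau_0^{l_0,l_2,l_3}$ is immediate: by the lemma producing the factor $K_{l_0,l_2,l_3}$ it solves the system \eqref{eqns:tau0tau0tau0tau0}, which by the reduction lemma is equivalent to the part of \eqref{eqns:A4_conditions_tau} and \eqref{eqns:action_T0_tau}--\eqref{eqns:action_T3_tau} governing the $\tau_0$'s. For the remaining functions, recall $\tau_1^{l_0,l_2,l_3}=\tau_0^{l_0,l_2,l_3}H_{l_0,l_2,l_3}$ together with the explicit solution of the essential relations \eqref{eqn:H_type2}, \eqref{eqn:H_type6}, \eqref{eqn:H_type8}, \eqref{eqn:H_3term}, namely $H_{l_0,l_2,l_3}=q^{2/3}\big(q^{-l_0-l_2-l_3+2}a_4/(a_0a_2)\big)^{1/2}\big(q^{l_0-1}a_0^{-1};q\big)_\infty G_{l_0,l_2,l_3}$. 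Using the $q$-shifted factorial identity $\frac{(q^{l_0-1}a_0^{-1};q,q)_\infty}{(q^{l_0}a_0^{-1};q,q)_\infty}=\big(q^{l_0-1}a_0^{-1};q\big)_\infty$ recorded in the introduction, one sees that the claimed expression for $\tau_{l_1}^{l_0,l_2,l_3}$ at $l_1=1$ reproduces $\tau_0^{l_0,l_2,l_3}H_{l_0,l_2,l_3}$ precisely when $\Phi_1^{l_0,l_2,l_3}$ is read as the $1\times1$ determinant $G_{l_0,l_2,l_3}$; this pins the free entry of Step 3 to $c_{l_0,l_2,l_3}=G_{l_0,l_2,l_3}$.

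It remains to treat $l_1\geq2$. I would invoke the substitution of Step 3, under which the discrete Toda equation \eqref{eqn:A4_1d_toda_T0} turns into the bilinear relation \eqref{eqn:Phi_bilinear}; since $\tau_0$ and $\tau_1$ are now fixed, \eqref{eqn:Phi_bilinear} determines $\Phi_{l_1}$ recursively from $\Phi_0=1$ and $\Phi_1=G_{l_0,l_2,l_3}$, hence all $\tau_{l_1}$. The heart of the argument is to verify that the Toeplitz determinant $\Phi_{l_1}^{l_0,l_2,l_3}=\det\big(G_{l_0+i-j,l_2,l_3}\big)_{i,j=1,\dots,l_1}$ of \eqref{eqn:def_Phi} solves \eqref{eqn:Phi_bilinear}. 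For this I would apply the Desnanot--Jacobi (Dodgson condensation) identity to the $(l_1+1)\times(l_1+1)$ Toeplitz matrix $M=\big(G_{l_0+i-j,l_2,l_3}\big)_{i,j=1,\dots,l_1+1}$: one has $\det M=\Phi_{l_1+1}^{l_0,l_2,l_3}$; deleting rows and columns $1$ and $l_1+1$ gives $\Phi_{l_1-1}^{l_0,l_2,l_3}$; deleting row $1$ and column $1$, or row $l_1+1$ and column $l_1+1$, gives $\Phi_{l_1}^{l_0,l_2,l_3}$; and deleting row $1$ and column $l_1+1$, respectively row $l_1+1$ and column $1$, gives (after shifting indices) $\Phi_{l_1}^{l_0+1,l_2,l_3}$, respectively $\Phi_{l_1}^{l_0-1,l_2,l_3}$. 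The Desnanot--Jacobi relation among these minors is exactly \eqref{eqn:Phi_bilinear}, and the boundary data $\Phi_0=1$, $\Phi_1=G_{l_0,l_2,l_3}$, $\Phi_{l_1}=0$ for $l_1<0$ match the determinant conventions, so the recursion has this determinant as its unique solution.

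Finally I would record that the resulting $\tau$ functions satisfy conditions (i)--(iii). Condition (iii) holds because $\Phi_{l_1}^{l_0,l_2,l_3}$ vanishes for $l_1<0$ while $\tau_0^{l_0,l_2,l_3}$ and the accompanying prefactors are regular; condition (ii) holds because $K_{l_0,l_2,l_3}$, the prefactors $q^{2{l_1}^3/3}\big(q^{-l_0-l_2-l_3+2}a_4/(a_0a_2)\big)^{{l_1}^2/2}$ and the factorial ratio, and each entry $G_{l_0+i-j,l_2,l_3}$ depend on $l_0,l_2,l_3$ only through $q^{l_0}a_0$, $q^{l_2}a_2$, $q^{-l_3}a_4$; and condition (i), together with the full representation system \eqref{eqns:A4_conditions_tau}--\eqref{eqns:action_T3_tau} and the Toda equations \eqref{eqn:A4_1d_toda_T2}, \eqref{eqn:A4_1d_toda_T3}, follows from the reduction lemmas, which show that a solution of the $K$- and $G$-systems yields a solution of the whole representation system, of which those two Toda equations are consequences. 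The main obstacle is the minor bookkeeping in the Desnanot--Jacobi step — one must track precisely how deleting rows and columns shifts the index $l_0+i-j$ so that the six minors line up with \eqref{eqn:Phi_bilinear} — and, alongside it, checking that the Step 3 substitution with its cubic- and quadratic-in-$l_1$ prefactors really intertwines \eqref{eqn:A4_1d_toda_T0} with \eqref{eqn:Phi_bilinear} for every $l_1$, not just at the base of the recursion.
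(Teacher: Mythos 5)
Your proposal is correct and follows essentially the same route as the paper: assemble the lemmas for $K_{l_0,l_2,l_3}$ and $G_{l_0,l_2,l_3}$, identify $\Phi_1^{l_0,l_2,l_3}=G_{l_0,l_2,l_3}$ via the $q$-factorial ratio, and solve the bilinear recursion \eqref{eqn:Phi_bilinear} by the Toeplitz determinant \eqref{eqn:def_Phi}. The only difference is that you spell out the Desnanot--Jacobi verification that the paper leaves implicit in its ``in general, equation \eqref{eqn:Phi_bilinear} admits a Jacobi--Trudi type solution'' remark, and your minor bookkeeping there is accurate.
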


\subsection{Discrete Painlev\'e equations}
Let us def\/ine the ten $f$-variables by
\begin{gather}\label{eqn:A4_def_f}
 f_1^{(j)}=\frac{\tau_1^{(j+1)}\tau_2^{(j)}}{\tau_1^{(j)}\tau_1^{(j+2)}},\qquad
 f_2^{(j)}=\frac{a_ja_{j+1}}{{a_{j+3}}^2}\,
 \frac{\tau_1^{(j+1)}\big(a_{j+2}a_{j+3}\tau_1^{(j)}\tau_1^{(j+3)}+a_j\tau_1^{(j+4)}\tau_2^{(j+3)}\big)}{\tau_1^{(j)}\tau_1^{(j+2)}\tau_2^{(j+1)}},
\end{gather}
where $j\in\bbZ/5\bbZ$. From the def\/inition above and conditions \eqref{eqns:A4_conditions_tau}, the following eight relations hold
\begin{gather*}
 f_2^{(j)}=\frac{a_ja_{j+1}\big(a_{j+2}a_{j+3}+a_jf_1^{(j+3)}\big)}{{a_{j+3}}^2f_1^{(j+1)}},\qquad j\in\bbZ/5\bbZ,\\
 a_4{a_0}^2f_1^{(2)}f_1^{(3)}=a_2a_3\big(a_0+a_2f_1^{(5)}\big),\qquad
 a_0{a_1}^2f_1^{(3)}f_1^{(4)}=a_3a_4\big(a_1+a_3f_1^{(1)}\big),\\
 a_2{a_3}^2f_1^{(5)}f_1^{(1)}=a_0a_1\big(a_3+a_0f_1^{(3)}\big).
\end{gather*}
Therefore, the $f$-variables are essentially two. The action of $\widetilde{W}\big(A_4^{(1)}\big)$ on these variables $f_i^{(j)}$ is given by the following lemma, which follows from the actions \eqref{eqns:A4_weylaction_tau}.

\begin{Lemma}
The action of $\widetilde{W}\big(A_4^{(1)}\big)$ on variables $f_i^{(j)}$ is given by
\begin{gather*}
 s_j\big(f_1^{(j+3)}\big)=f_2^{(j+3)},\qquad s_j\big(f_2^{(j+3)}\big)=f_1^{(j+3)},\\
 s_j\big(f_1^{(j)}\big)=\frac{a_{j+4}}{a_ja_{j+1}{a_{j+2}}^2} \frac{a_{j+2}+a_ja_{j+4}f_1^{(j+2)}}{f_1^{(j+4)}},\\
 s_j\big(f_2^{(j+2)}\big)=\frac{a_ja_{j+3}a_{j+4}}{a_{j+1}} \frac{a_{j+2}+a_ja_{j+4}f_1^{(j+2)}+a_ja_{j+1}a_{j+2}f_1^{(j+4)}}{f_1^{(j+4)}f_2^{(j+3)}},\\
 s_j\big(f_2^{(j+4)}\big)=\frac{a_ja_{j+1}{a_{j+2}}^2}{a_{j+4}}\frac{f_1^{(j+4)}f_1^{(j)}f_2^{(j+4)}}{a_{j+2}+a_ja_{j+4}f_1^{(j+2)}},\\
 s_j\big(f_2^{(j)}\big)=\frac{a_ja_{j+1}a_{j+4}+a_{j+3}a_{j+4}f_1^{(j+1)}+a_j{a_{j+1}}^2a_{j+2}f_1^{(j+4)}}{a_ja_{j+1}a_{j+3}f_1^{(j+1)}f_1^{(j+4)}},\\
 \pi\big(f_i^{(j)}\big)=f_i^{(j+1)},\qquad \iota\big(f_1^{(j)}\big)=f_1^{(3-j)},\qquad
 \iota\big(f_2^{(j)}\big)=\frac{a_{2-j}\big(a_{5-j}+a_{2-j}a_{3-j}f_1^{(5-j)}\big)}{a_{3-j}a_{4-j}{a_{5-j}}^2f_1^{(2-j)}},
\end{gather*}
where $i=1,2$ and $j\in\bbZ/5\bbZ$.
\end{Lemma}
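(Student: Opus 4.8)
The plan is to prove every identity by a direct computation from the $\tau$-level action~\eqref{eqns:A4_weylaction_tau}: substitute the definitions~\eqref{eqn:A4_def_f} of $f_1^{(j)}$ and $f_2^{(j)}$, push the generator in question through using~\eqref{eqns:A4_weylaction_tau} (remembering that the parameters $a_j$ transform too, via $s_i(a_j)=a_j a_i^{-a_{ij}}$), and then rewrite the resulting rational function of the $\tau_i^{(k)}$ back in terms of the $f_i^{(k)}$ using the three relations~\eqref{eqns:A4_conditions_tau} and the eight consequent $f$-relations recorded just above the statement. Before that I would cut down the work: from $\sigma\big(\tau_i^{(k)}\big)=\tau_i^{(k+1)}$ one reads $\pi\big(f_i^{(j)}\big)=f_i^{(j+1)}$ straight off~\eqref{eqn:A4_def_f} (here $\pi$ is the transformation that acts on the $\tau$'s as $\sigma$ does), and from $\sigma s_j=s_{j+1}\sigma$ one has $s_j=\sigma^j s_0\sigma^{-j}$, so it is enough to establish the six $s$-identities for the single index $j=0$ and then apply $\sigma^j$, which raises every $f$- and $a$-index by $j$. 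Thus the genuine content is the one generator $s_0$ together with $\iota$.

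For $s_0$ the action~\eqref{eqns:A4_weylaction_tau} fixes $\tau_1^{(1)},\tau_1^{(2)},\tau_1^{(3)},\tau_1^{(4)},\tau_2^{(1)},\tau_2^{(2)}$, interchanges $\tau_1^{(5)}$ and $\tau_2^{(4)}$, and sends $\tau_2^{(3)}$ and $\tau_2^{(5)}$ to the two binomial expressions, while on parameters it is $a_0\mapsto a_0^{-1}$, $a_1\mapsto a_0a_1$, $a_4\mapsto a_0a_4$ with $a_2,a_3$ fixed. The identity $s_0\big(f_1^{(3)}\big)=f_2^{(3)}$, and the $s_0$-formula for $f_1^{(0)}$, come out by bare substitution with no appeal to~\eqref{eqns:A4_conditions_tau} at all (the $s_0$-image of $\tau_1^{(4)}\tau_2^{(3)}/\big(\tau_1^{(3)}\tau_1^{(5)}\big)$ is literally the $j=3$ case of the $f_2$-formula in~\eqref{eqn:A4_def_f}), whereas the companion $s_0\big(f_2^{(3)}\big)=f_1^{(3)}$ needs exactly one use of~\eqref{eqns:A4_conditions_tau_3}, to recognise the product $\tau_2^{(3)}\tau_2^{(4)}$. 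The laborious cases are $s_0\big(f_2^{(2)}\big)$, $s_0\big(f_2^{(4)}\big)$ and $s_0\big(f_2^{(0)}\big)$: here one substitutes the binomial $s_0$-images of $\tau_2^{(3)}$ and $\tau_2^{(5)}$, expands, and then invokes~\eqref{eqns:A4_conditions_tau_1}--\eqref{eqns:A4_conditions_tau_3} one or more times to fold the resulting numerator, which carries several monomials, into the claimed two- or three-term shape with only $f_1$'s (and possibly a single $f_2$) surviving. Keeping track of which of the three conditions removes which $\tau_2$ in which monomial is the main obstacle, though the manipulation is entirely mechanical.

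For $\iota$ there is one preliminary point: the rule~\eqref{eqns:A4_weylaction_tau} prescribes $\iota$ on only eight of the ten $\tau$-variables, so I would first recover the missing images $\iota\big(\tau_1^{(5)}\big)$ and $\iota\big(\tau_2^{(4)}\big)$ by imposing compatibility with~\eqref{eqns:A4_conditions_tau}: applying $\iota$ (with $\iota\colon a_k\mapsto a_{-k}^{-1}$, indices mod $5$, and the prescribed images of the other $\tau$'s) to~\eqref{eqns:A4_conditions_tau_1} yields a single linear equation whose solution is $\iota\big(\tau_1^{(5)}\big)$, and applying $\iota$ to~\eqref{eqns:A4_conditions_tau_3} gives $\iota\big(\tau_2^{(4)}\big)$ outright. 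With these in hand, $\iota\big(f_1^{(j)}\big)=f_1^{(3-j)}$ is a short substitution into~\eqref{eqn:A4_def_f}, the index reversal making it convenient to treat $j=1,2$ first and then $j=3,4,5$; and $\iota\big(f_2^{(j)}\big)$ is most economically obtained by applying $\iota$ to the relation $f_2^{(j)}=a_ja_{j+1}\big(a_{j+2}a_{j+3}+a_jf_1^{(j+3)}\big)/\big(a_{j+3}^2f_1^{(j+1)}\big)$ and inserting $\iota\big(f_1^{(k)}\big)=f_1^{(3-k)}$ together with the parameter action. Collecting the $s_0$-, $\pi$- and $\iota$-computations and propagating the $s_0$ identities to all $j$ by $\sigma^j$-conjugation then gives every stated formula.
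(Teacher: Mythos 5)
Your proposal is correct and matches the paper's (unstated) argument: the paper merely asserts that the lemma ``follows from the actions \eqref{eqns:A4_weylaction_tau}'', i.e., by direct substitution of the definitions \eqref{eqn:A4_def_f} and simplification via \eqref{eqns:A4_conditions_tau}, which is exactly what you carry out. Your reductions --- conjugating by $\sigma$ to treat only $s_0$, recovering the two unspecified $\iota$-images from compatibility with \eqref{eqns:A4_conditions_tau}, and reading $\iota\big(f_2^{(j)}\big)$ off the relation $f_2^{(j)}=a_ja_{j+1}\big(a_{j+2}a_{j+3}+a_jf_1^{(j+3)}\big)/\big({a_{j+3}}^2f_1^{(j+1)}\big)$ --- are sound labour-saving devices layered on the same direct computation.
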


It is well known that the translation part of $\widetilde{W}\big(A_4^{(1)}\big)$ give discrete Painlev\'e equations~\cite{SakaiH2001:MR1882403}. Let
\begin{gather}\label{eqn:A4_def_XY}
 X_{l_1}^{(i)}={T_1}^{l_1}\big(f_1^{(i)}\big),\qquad
 Y_{l_1}^{(i)}={T_1}^{l_1}\big(f_2^{(i)}\big),\qquad
 \al_{l_1}^{(i)}={T_1}^{l_1}(a_i)=
 \begin{cases}
 q^{l_1}a_1&\text{if} \ i=1,\\
 q^{-l_1}a_2&\text{if} \ i=2,\\
 a_i&\text{otherwise}.
 \end{cases}
\end{gather}
The action of $T_i$:
\begin{alignat*}{3}
 &T_i\colon \ && (a_i,a_{i+1})\mapsto\big(qa_i,q^{-1}a_{i+1}\big),&\\
 &&& T_i\big(f_1^{(i+3)}\big)f_1^{(i+3)} =\frac{a_{i+3}}{{a_i}^2{a_{i+1}}^2a_{i+4}}
 \frac{\big(a_{i+1}+a_{i+3}a_{i+4}f_1^{(i+1)}\big)\big(a_{i+1}+a_{i+3}f_1^{(i+1)}\big)}{a_ia_{i+1}+a_{i+3}f_1^{(i+1)}},& \\
 &&&{T_i}^{-1}\big(f_1^{(i+1)}\big)f_1^{(i+1)}
 =\frac{a_i{a_{i+1}}^3}{{a_{i+3}}^2} \frac{\big(a_{i+2}a_{i+3}+a_if_1^{(i+3)}\big)\big(a_{i+3}+a_if_1^{(i+3)}\big)}{a_{i+3}+a_ia_{i+1}f_1^{(i+3)}},
\end{alignat*}
where $i\in\bbZ/5\bbZ$, lead a $q$-discrete analogue of Painlev\'e~V equation~\cite{SakaiH2001:MR1882403}
\begin{gather}
 T_i\big(X_{l_1}^{(i+3)}\big)X_{l_1}^{(i+3)}
 =\frac{\al_{l_1}^{(i+3)}}{\big(\al_{l_1}^{(i)}\big)^2\big(\al_{l_1}^{(i+1)}\big)^2\al_{l_1}^{(i+4)}}\nonumber\\
 \hphantom{T_i\big(X_{l_1}^{(i+3)}\big)X_{l_1}^{(i+3)}=}{}\times
 \frac{\big(\al_{l_1}^{(i+1)}+\al_{l_1}^{(i+3)}\al_{l_1}^{(i+4)}X_{l_1}^{(i+1)}\big)\big(\al_{l_1}^{(i+1)}
 +\al_{l_1}^{(i+3)}X_{l_1}^{(i+1)}\big)}{\al_{l_1}^{(i)}\al_{l_1}^{(i+1)}+\al_{l_1}^{(i+3)}X_{l_1}^{(i+1)}},\nonumber\\
 {T_i}^{-1}\big(X_{l_1}^{(i+1)}\big)X_{l_1}^{(i+1)}
 =\frac{\al_{l_1}^{(i)}\big(\al_{l_1}^{(i+1)}\big)^3}{\big(\al_{l_1}^{(i+3)}\big)^2}
 \frac{\big(\al_{l_1}^{(i+2)}\al_{l_1}^{(i+3)}\!+\al_{l_1}^{(i)} X_{l_1}^{(i+3)}\big)\big(\al_{l_1}^{(i+3)}\!+\al_{l_1}^{(i)} X_{l_1}^{(i+3)}\big)}{\al_{l_1}^{(i+3)}\!+\al_{l_1}^{(i)}\al_{l_1}^{(i+1)}X_{l_1}^{(i+3)}}.\!\!\!\!\!\label{eqn:qp5_1}
\end{gather}
Moreover, the action of $T_{23}^{(i)}=T_{i+2}T_{i+3}$:
\begin{alignat*}{3}
 & T_{23}^{(i)}\colon \ && (a_{i+2},a_{i+4})\mapsto\big(qa_{i+2},q^{-1}a_{i+4}\big),&\\
 &&& \left(T_{23}^{(i)}\big(f_2^{(i+2)}\big)f_1^{(i+3)}-\frac{a_{i+2}a_{i+3}a_{i+4}}{a_i}\right)
 \left(f_2^{(i+2)}f_1^{(i+3)}-\frac{a_{i+2}a_{i+3}a_{i+4}}{a_i}\right) &\\
 &&& \quad{}=\frac{{a_{i+2}}^3a_{i+3}a_{i+4}}{{a_i}^2}
 \frac{\big(a_{i+3}+a_if_1^{(i+3)}\big)\big(a_{i+3}+a_ia_{i+1}f_1^{(i+3)}\big)}{a_{i+2}a_{i+3}+a_if_1^{(i+3)}},& \\
 &&& \left(f_2^{(i+2)}f_1^{(i+3)}-\frac{a_{i+2}a_{i+3}a_{i+4}}{a_i}\right)
 \left(f_2^{(i+2)}{T_{23}^{(i)}}^{-1}\big(f_1^{(i+3)}\big)-\frac{a_{i+2}a_{i+3}a_{i+4}}{a_i}\right)& \\
 &&& \quad{} =\frac{a_{i+2}a_{i+3}}{{a_i}^2a_{i+1}a_{i+4}}
 \frac{\big(a_{i+1}a_{i+2}a_{i+4}+f_2^{(i+2)}\big)\big(a_{i+2}a_{i+4}+f_2^{(i+2)}\big)}{a_{i+2}+a_if_2^{(i+2)}},
\end{alignat*}
where $i\in\bbZ/5\bbZ$, and that of $T_{13}^{(i)}=T_{i+1}T_{i+3}$:
\begin{alignat*}{3}
 &T_{13}^{(i)}\colon \ && (a_{i+1},a_{i+2},a_{i+3},a_{i+4})\mapsto\big(qa_{i+1},q^{-1}a_{i+2},qa_{i+3},q^{-1}a_{i+4}\big),& \\
 &&& \left(T_{13}^{(i)}\big(f_1^{(i+1)}\big)f_1^{(i+2)}-\frac{a_{i+1} a_{i+2}}{a_{i+3} a_{i+4}}\right)
 \left(f_1^{(i+1)} f_1^{(i+2)}-\frac{a_{i+1} a_{i+2}}{a_{i+3} a_{i+4}}\right)& \\
 &&& \quad{} =\frac{{a_{i+1}}^3 a_{i+2}}{a_{i+3} {a_{i+4}}^2}
 \frac{\big(a_{i+2}+a_i a_{i+4} f_1^{(i+2)}\big) \big(a_{i+2}+a_{i+4} f_1^{(i+2)}\big)}{a_{i+1} a_{i+2}+a_{i+4} f_1^{(i+2)}},& \\
 &&& \left(f_1^{(i+1)} f_1^{(i+2)}-\frac{a_{i+1} a_{i+2}}{a_{i+3} a_{i+4}}\right)
 \left(f_1^{(i+1)}{T_{13}^{(i)}}^{-1}\big(f_1^{(i+2)}\big)-\frac{a_{i+1} a_{i+2}}{a_{i+3} a_{i+4}}\right)& \\
 &&& \quad{} =\frac{a_{i+1} a_{i+2}}{a_i {a_{i+3}}^2 {a_{i+4}}^2}
 \frac{\big(a_{i+1}+a_{i+3} f_1^{(i+1)}\big)\big(a_i a_{i+1}+a_{i+3} f_1^{(i+1)}\big)}{a_{i+1}+a_{i+3} a_{i+4} f_1^{(i+1)}},&
\end{alignat*}
where $i\in\bbZ/5\bbZ$, respectively give the systems
\begin{subequations}\label{eqn:qp5_2}
\begin{gather}
\left(T_{23}^{(i)}\big(Y_{l_1}^{(i+2)}\big)X_{l_1}^{(i+3)}-\frac{\al_{l_1}^{(i+2)}\al_{l_1}^{(i+3)}\al_{l_1}^{(i+4)}}{\al_{l_1}^{(i)}}\right)
 \left(Y_{l_1}^{(i+2)}X_{l_1}^{(i+3)}-\frac{\al_{l_1}^{(i+2)}\al_{l_1}^{(i+3)}\al_{l_1}^{(i+4)}}{\al_{l_1}^{(i)}}\right)\nonumber\\
 \qquad{}=\frac{{\al_{l_1}^{(i+2)}}^3\al_{l_1}^{(i+3)}\al_{l_1}^{(i+4)}}{\big(\al_{l_1}^{(i)}\big)^2}
 \frac{\big(\al_{l_1}^{(i+3)}+\al_{l_1}^{(i)}X_{l_1}^{(i+3)}\big)\big(\al_{l_1}^{(i+3)}+\al_{l_1}^{(i)}\al_{l_1}^{(i+1)}
 X_{l_1}^{(i+3)}\big)}{\al_{l_1}^{(i+2)}\al_{l_1}^{(i+3)}+\al_{l_1}^{(i)}X_{l_1}^{(i+3)}},\label{eqn:qp5_2-a}\\
\left(Y_{l_1}^{(i+2)}X_{l_1}^{(i+3)}-\frac{\al_{l_1}^{(i+2)}\al_{l_1}^{(i+3)}\al_{l_1}^{(i+4)}}{\al_{l_1}^{(i)}}\right)
 \left(Y_{l_1}^{(i+2)}{T_{23}^{(i)}}^{-1}\big(X_{l_1}^{(i+3)}\big)-\frac{\al_{l_1}^{(i+2)}\al_{l_1}^{(i+3)}\al_{l_1}^{(i+4)}}{\al_{l_1}^{(i)}}\right)\nonumber\\
 \qquad{}=\frac{\al_{l_1}^{(i+2)}\al_{l_1}^{(i+3)}}{\big(\al_{l_1}^{(i)}\big)^2\al_{l_1}^{(i+1)}\al_{l_1}^{(i+4)}}
 \frac{\big(\al_{l_1}^{(i+1)}\al_{l_1}^{(i+2)}\al_{l_1}^{(i+4)}+Y_{l_1}^{(i+2)}\big)
 \big(\al_{l_1}^{(i+2)}\al_{l_1}^{(i+4)}+Y_{l_1}^{(i+2)}\big)}{\al_{l_1}^{(i+2)}+\al_{l_1}^{(i)}Y_{l_1}^{(i+2)}},\label{eqn:qp5_2-b}
 \end{gather}
\end{subequations}
and
\begin{subequations}\label{eqn:qp5_3}
\begin{gather}
\left(T_{13}^{(i)}\big(X_{l_1}^{(i+1)}\big)X_{l_1}^{(i+2)}-\frac{\al_{l_1}^{(i+1)} \al_{l_1}^{(i+2)}}{\al_{l_1}^{(i+3)} \al_{l_1}^{(i+4)}}\right)
 \left(X_{l_1}^{(i+1)} X_{l_1}^{(i+2)}-\frac{\al_{l_1}^{(i+1)} \al_{l_1}^{(i+2)}}{\al_{l_1}^{(i+3)} \al_{l_1}^{(i+4)}}\right)\nonumber\\
 \qquad{}=\frac{\big(\al_{l_1}^{(i+1)}\big)^3 \al_{l_1}^{(i+2)}}{\al_{l_1}^{(i+3)} \big(\al_{l_1}^{(i+4)}\big)^2}
 \frac{\big(\al_{l_1}^{(i+2)}+\al_{l_1}^{(i)} \al_{l_1}^{(i+4)} X_{l_1}^{(i+2)}\big) \big(\al_{l_1}^{(i+2)}+\al_{l_1}^{(i+4)} X_{l_1}^{(i+2)}\big)}{\al_{l_1}^{(i+1)} \al_{l_1}^{(i+2)}+\al_{l_1}^{(i+4)} X_{l_1}^{(i+2)}},\label{eqn:qp5_3-a}\\
 \left(X_{l_1}^{(i+1)} X_{l_1}^{(i+2)}-\frac{\al_{l_1}^{(i+1)} \al_{l_1}^{(i+2)}}{\al_{l_1}^{(i+3)} \al_{l_1}^{(i+4)}}\right)
 \left(X_{l_1}^{(i+1)}{T_{13}^{(i)}}^{-1}\big(X_{l_1}^{(i+2)}\big)-\frac{\al_{l_1}^{(i+1)} \al_{l_1}^{(i+2)}}{\al_{l_1}^{(i+3)} \al_{l_1}^{(i+4)}}\right)\nonumber\\
 \qquad{}=\frac{\al_{l_1}^{(i+1)} \al_{l_1}^{(i+2)}}{\al_{l_1}^{(i)} \big(\al_{l_1}^{(i+3)}\big)^2 \big(\al_{l_1}^{(i+4)}\big)^2}
 \frac{\big(\al_{l_1}^{(i+1)}+\al_{l_1}^{(i+3)} X_{l_1}^{(i+1)}\big)\big(\al_{l_1}^{(i)} \al_{l_1}^{(i+1)}+\al_{l_1}^{(i+3)} X_{l_1}^{(i+1)}\big)}{\al_{l_1}^{(i+1)}+\al_{l_1}^{(i+3)} \al_{l_1}^{(i+4)} X_{l_1}^{(i+1)}}.\label{eqn:qp5_3-b}
\end{gather}
\end{subequations}
Systems \eqref{eqn:qp5_2} and \eqref{eqn:qp5_3} are also known as $q$-discrete analogues of Painlev\'e~V equation~\cite{TGCR2004:MR2058894}.

From equation \eqref{eqns:A4_config_tau}, def\/initions \eqref{eqn:A4_def_f} and~\eqref{eqn:A4_def_XY}
and Theorem~\ref{theorem:A4_hypertau}, we obtain the following corollary.

\begin{Corollary}\label{corollary:A4_HGsol}
Under the condition \eqref{eqn:condition_para},
the hypergeometric solutions of $q$-Painlev\'e equations \eqref{eqn:qp5_1}, \eqref{eqn:qp5_2} and \eqref{eqn:qp5_3} are given by
\begin{gather*}
 X_{l_1}^{(1)} =q^{l_1+1/2} \frac{\Phi_{l_1}^{1,1,1} \Phi_{l_1+1}^{1,0,1}}{\Phi_{l_1}^{1,0,1} \Phi_{l_1+1}^{1,1,1}},\qquad
 X_{l_1}^{(2)} =-\frac{q a_2}{a_0 a_4} \frac{\Phi_{l_1+1}^{1,0,2} \Phi_{l_1+1}^{1,1,1}}{\Phi_{l_1+1}^{1,0,1} \Phi_{l_1+1}^{1,1,2}},\\
 X_{l_1}^{(3)} =\frac{1-a_4}{q^{l_1+1/2} a_0 a_2 {a_4}^2} \frac{\Phi_{l_1}^{0,0,0} \Phi_{l_1+1}^{1,1,2}}{\Phi_{l_1}^{0,0,1} \Phi_{l_1+1}^{1,1,1}},\qquad
 X_{l_1}^{(4)} =-\frac{a_0 a_4}{q^{l_1+1/2}a_2} \frac{\Phi_{l_1}^{0,0,1} \Phi_{l_1+1}^{2,1,2}}{\Phi_{l_1}^{1,0,1} \Phi_{l_1+1}^{1,1,2}},\\
 X_{l_1}^{(5)} =\frac{q^{l_1+1}-a_0}{q^{1/2}} \frac{\Phi_{l_1}^{1,0,1} \Phi_{l_1+1}^{0,0,1}}{\Phi_{l_1}^{0,0,1} \Phi_{l_1+1}^{1,0,1}},\qquad
 Y_{l_1}^{(1)} =\frac{1}{q^{1/2} a_2} \frac{\Phi_{l_1+1}^{1,0,1} \Phi_{l_1+1}^{2,1,2}}{\Phi_{l_1+1}^{1,0,2} \Phi_{l_1+1}^{1,1,1}}
 \left( \frac{\Phi_{l_1}^{0,0,1}}{\Phi_{l_1}^{1,0,1}} -\frac{q^{1/2}}{a_4}\frac{\Phi_{l_1+1}^{1,1,2}}{\Phi_{l_1+1}^{2,1,2}} \right),\\
 Y_{l_1}^{(2)} =\frac{q^{1/2}a_2 {a_4}^2}{1-a_4} \frac{\Phi_{l_1}^{1,0,1} \Phi_{l_1+1}^{1,1,1}}{\Phi_{l_1}^{0,0,0} \Phi_{l_1+1}^{1,1,2}}
 \left(\frac{\Phi_{l_1}^{0,0,1}}{\Phi_{l_1}^{1,0,1}}+\frac{a_2 \big(q^{l_1+1}-a_0\big)}{q^{l_1+1/2}a_0 a_4} \frac{\Phi_{l_1+1}^{0,0,1}}{\Phi_{l_1+1}^{1,0,1}} \right),\\
 Y_{l_1}^{(3)} =-\frac{q^{1/2}a_0}{a_4} \frac{\Phi_{l_1}^{1,1,1} \Phi_{l_1+1}^{1,1,2}}{\Phi_{l_1}^{0,0,1} \Phi_{l_1+1}^{2,1,2}}
 \left(\frac{\Phi_{l_1}^{1,0,1}}{\Phi_{l_1}^{1,1,1}}+\frac{1}{q^{1/2}a_2 a_4}\frac{\Phi_{l_1+1}^{1,0,1}}{\Phi_{l_1+1}^{1,1,1}}\right),\\
 Y_{l_1}^{(4)}=\frac{q^{2 l_1+1/2}a_4}{a_2 \big(q^{l_1+1}-a_0\big)}
 \frac{\Phi_{l_1}^{0,0,1} \Phi_{l_1+1}^{1,0,2}}{\Phi_{l_1}^{1,0,1} \Phi_{l_1+1}^{0,0,1}}
 \left(\frac{\Phi_{l_1+1}^{1,0,1}}{\Phi_{l_1+1}^{1,0,2}}-\frac{\Phi_{l_1+1}^{1,1,1}}{\Phi_{l_1+1}^{1,1,2}}\right),\\
 Y_{l_1}^{(5)} =\frac{a_2 (1-a_4)}{q^{l_1}} \frac{\Phi_{l_1}^{1,0,1} \Phi_{l_1+1}^{1,1,2}}{\Phi_{l_1}^{1,1,1} \Phi_{l_1+1}^{1,0,1}}
 \left(\frac{\Phi_{l_1}^{0,0,0}}{\Phi_{l_1}^{0,0,1}} +\frac{q^{1/2}a_2 a_4}{1-a_4}\,\frac{\Phi_{l_1+1}^{1,1,1}}{\Phi_{l_1+1}^{1,1,2}}\right),
\end{gather*}
where the functions $\big\{\Phi_{l_1}^{l_0,l_2,l_3}\big\}_{l_1\in\bbZ_{\geq0}}$ are def\/ined by~\eqref{eqn:def_Phi}.
Note that the actions of translations $T_i$, $i=0,\dots,4$, on these solutions are given by the following
\begin{gather*}
 T_0\colon \ \big(a_0,a_2,a_4,l_1,q,\Phi_{l_1}^{l_0,l_2,l_3}\big) \mapsto\big(qa_0,a_2,a_4,l_1,q,\Phi_{l_1}^{l_0+1,l_2,l_3}\big),\\
 T_1\colon \ \big(a_0,a_2,a_4,l_1,q,\Phi_{l_1}^{l_0,l_2,l_3}\big) \mapsto\big(a_0,q^{-1}a_2,a_4,l_1+1,q,\Phi_{l_1+1}^{l_0,l_2,l_3}\big),\\
 T_2\colon \ \big(a_0,a_2,a_4,l_1,q,\Phi_{l_1}^{l_0,l_2,l_3}\big) \mapsto\big(a_0,q a_2,a_4,l_1,q,\Phi_{l_1}^{l_0,l_2+1,l_3}\big),\\
 T_3\colon \ \big(a_0,a_2,a_4,l_1,q,\Phi_{l_1}^{l_0,l_2,l_3}\big) \mapsto\big(a_0,a_2,q^{-1}a_4,l_1,q,\Phi_{l_1}^{l_0,l_2,l_3+1}\big),\\
 T_4\colon \ \big(a_0,a_2,a_4,l_1,q,\Phi_{l_1}^{l_0,l_2,l_3}\big) \mapsto\big(q^{-1}a_0,a_2,qa_4,l_1-1,q,\Phi_{l_1-1}^{l_0-1,l_2-1,l_3-1}\big).
\end{gather*}
\end{Corollary}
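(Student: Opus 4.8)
The plan is to obtain every formula by substitution, in three stages: rewrite the $f$-variables as ratios of the lattice $\tau$ functions $\tau_{l_1}^{l_0,l_2,l_3}$, insert the closed forms of Theorem~\ref{theorem:A4_hypertau} and watch the analytic prefactors cancel, and finally read off the translation actions. First I would use the dictionary \eqref{eqns:A4_config_tau} to replace each of the ten variables $\tau_i^{(j)}$ occurring in the definitions \eqref{eqn:A4_def_f} of $f_1^{(j)}$ and $f_2^{(j)}$ by the corresponding $\tau_{l_1}^{l_0,l_2,l_3}$, using for $f_2^{(j)}$ the second expression in \eqref{eqn:A4_def_f} (the one with numerator $a_{j+2}a_{j+3}\tau_1^{(j)}\tau_1^{(j+3)}+a_j\tau_1^{(j+4)}\tau_2^{(j+3)}$), which is exactly what produces the parenthetical two-term combinations in the statement. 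Applying $T_1^{l_1}$ and using \eqref{eqn:A4_def_XY}, together with the facts that $T_1$ shifts the lower index $l_1\mapsto l_1+1$ and sends $(a_1,a_2)\mapsto(qa_1,q^{-1}a_2)$ with $a_1=q/a_0$ and $a_3=1/(a_2a_4)$ eliminated via \eqref{eqn:condition_para} and $a_0a_1a_2a_3a_4=q$, then expresses each $X_{l_1}^{(i)}$ as a ratio of $\tau_{l_1+\ast}^{\ast,\ast,\ast}$'s and each $Y_{l_1}^{(i)}$ as a scalar times such a two-term combination (one term coming from each summand of the numerator of $f_2^{(j)}$).

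In the second stage I would substitute
\[
 \tau_{l_1}^{l_0,l_2,l_3}=q^{2l_1^3/3}\Big(q^{-l_0-l_2-l_3+2}\frac{a_4}{a_0a_2}\Big)^{l_1^2/2}\frac{(q^{l_0-l_1}a_0^{-1};q,q)_\infty}{(q^{l_0}a_0^{-1};q,q)_\infty}\,\tau_0^{l_0,l_2,l_3}\,\Phi_{l_1}^{l_0,l_2,l_3}
\]
from Theorem~\ref{theorem:A4_hypertau}. In each ratio the factor $\tau_0^{l_0,l_2,l_3}$ (hence the elliptic-gamma data packaged in $K_{l_0,l_2,l_3}$) cancels; the powers $q^{2l_1^3/3}$ and $(q^{-l_0-l_2-l_3+2}a_4/(a_0a_2))^{l_1^2/2}$ cancel because numerator and denominator carry matching $l_1$ and $l_0+l_2+l_3$; and the double $q$-Pochhammer prefactors reduce to the handful of linear factors ($1-a_4$, $q^{l_1+1}-a_0$, etc.) visible in the stated coefficients. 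Collecting the surviving powers of $q,a_0,a_2,a_4$ then yields precisely the printed coefficients, e.g.\ $q^{l_1+1/2}$ in $X_{l_1}^{(1)}$ or $(1-a_4)/(q^{l_1+1/2}a_0a_2a_4^2)$ in $X_{l_1}^{(3)}$. That these expressions actually \emph{solve} \eqref{eqn:qp5_1}, \eqref{eqn:qp5_2} and \eqref{eqn:qp5_3} needs no further argument: by the Lemma giving the action of $\widetilde{W}\big(A_4^{(1)}\big)$ on the $f_i^{(j)}$ together with \eqref{eqn:A4_def_XY}, these $X$'s and $Y$'s satisfy those $q$-Painlev\'e~V systems for \emph{any} solution of the $\tau$-system, and the hypergeometric $\tau$ functions of Theorem~\ref{theorem:A4_hypertau} satisfy all of \eqref{eqns:A4_conditions_tau} and \eqref{eqns:action_T0_tau}--\eqref{eqns:action_T3_tau} by construction.

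For the translation table I would compute $T_i(a_j)$ for $j\in\{0,2,4\}$ directly from $T_i(a_i)=qa_i$ and $T_i(a_{i+1})=q^{-1}a_{i+1}$, determine the induced shift of $l_1$ from $\tau_{l_1}^{l_0,l_2,l_3}=T_0^{l_0}T_1^{l_1}T_2^{l_2}T_3^{l_3}(\tau_2^{(3)})$ and the commutation of the $T_i$, and then use condition~(ii) to track how the argument of $\Phi_{l_1}^{l_0,l_2,l_3}$ moves; for $T_0,\dots,T_3$ this keeps the superscripts fixed and shifts a single parameter, and the only case needing care is $T_4=(T_0T_1T_2T_3)^{-1}$, which simultaneously lowers $l_1$ by one and all of $l_0,l_2,l_3$ by one. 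The main obstacle throughout is purely the bookkeeping in the second stage: keeping track of the half- and third-integer powers of $q$ hidden inside $K_{l_0,l_2,l_3}$, $G_{l_0,l_2,l_3}$ and the exponents $l_1^2/2$ and $2l_1^3/3$, and checking that after all cancellations the residual scalars match the printed coefficients down to sign; no conceptual difficulty remains once the dictionary \eqref{eqns:A4_config_tau} and Theorem~\ref{theorem:A4_hypertau} are in hand.
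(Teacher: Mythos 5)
Your proposal is correct and follows exactly the route the paper intends: the paper offers no separate proof of Corollary~\ref{corollary:A4_HGsol}, stating only that it follows from the dictionary~\eqref{eqns:A4_config_tau}, the definitions~\eqref{eqn:A4_def_f} and~\eqref{eqn:A4_def_XY}, and Theorem~\ref{theorem:A4_hypertau}, which is precisely the substitute-and-cancel computation you describe (e.g.\ your sample cancellation giving the factor $q^{l_1+1/2}$ in $X_{l_1}^{(1)}$ checks out). Your handling of the translation table, including $T_4=(T_0T_1T_2T_3)^{-1}$, also matches the paper's conventions.
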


\section[Hypergeometric $\tau$ functions of $\widetilde{W}\big((A_1+A_1')^{(1)}\big)$-type]{Hypergeometric $\boldsymbol{\tau}$ functions of $\boldsymbol{\widetilde{W}\big((A_1+A_1')^{(1)}\big)}$-type}\label{section:HGtau_A1A1}

In this section, we construct the hypergeometric $\tau$ functions of $\widetilde{W}\big((A_1+A_1')^{(1)}\big)$-type.

\subsection[$\tau$ functions]{$\boldsymbol{\tau}$ functions}
The action of the transformation group $\widetilde{W}\big((A_1+A_1')^{(1)}\big)=\langle s_0,s_1,w_0,w_1,\pi\rangle$
on the para\-me\-ters~$a_0$, $a_1$ and $b$ are given by
\begin{alignat*}{5}
 &s_0\colon \ && (a_0,a_1,b) \mapsto \left(\frac{1}{a_0}, {a_0}^2 a_1, \frac{b}{a_0}\right),\qquad &&s_1\colon \ && (a_0,a_1,b)
 \mapsto \left(a_0 {a_1}^2, \frac{1}{a_1}, a_1 b\right),& \\
 &w_0\colon \ && (a_0,a_1,b) \mapsto \left(\frac{1}{a_0}, \frac{1}{a_1}, \frac{b}{a_0}\right),\qquad && w_1\colon \ && (a_0,a_1,b)
 \mapsto \left(\frac{1}{a_0}, \frac{1}{a_1}, \frac{b}{{a_0}^2 a_1}\right),& \\
 &\pi\colon \ && (a_0,a_1,b) \mapsto \left(\frac{1}{a_1}, \frac{1}{a_0}, \frac{b}{a_0 a_1}\right), &&&&&
\end{alignat*}
while its actions on the variables $\tau_i$, $i=-3,\dots,3$, are given by
\begin{alignat*}{3}
 &s_0\colon \ && (\tau_{-3},\tau_{-1},\tau_1) \mapsto
 \left(\frac{a_0 \tau_1 {\tau_{-2}}^2+\tau_{-1} \tau_0 \tau_{-2}+\tau_{-3} {\tau_0}^2}{a_0 \tau_{-1} \tau_1},
 \frac{a_0 {\tau_0}^2+b \tau_{-2} \tau_2}{a_0 \tau_1}, \frac{b \tau_{-2} \tau_2+{\tau_0}^2}{\tau_{-1}}\right),&\\
 &s_1\colon \ && (\tau_{-2},\tau_0) \mapsto \left(\frac{a_0 a_1 {\tau_{-1}}^2+b \tau_{-3} \tau_1}{a_0 a_1 \tau_0},
 \frac{a_0 {\tau_{-1}}^2+b \tau_{-3} \tau_1}{a_0 \tau_{-2}} \right),& \\
 &w_0\colon \ && (\tau_{-3},\tau_{-2},\tau_{-1},\tau_1) \mapsto \left(\tau_3,\tau_2,\tau_1,\tau_{-1}\right),& \\
 &w_1\colon \ && (\tau_{-3},\tau_{-2},\tau_0,\tau_1) \mapsto \left(\tau_1,\tau_0,\tau_{-2},\tau_{-3}\right), & \\
 &\pi\colon \ && (\tau_{-3},\tau_{-2},\tau_{-1},\tau_0,\tau_1) \mapsto \left(\tau_2, \tau_1, \tau_0, \tau_{-1}, \tau_{-2}\right),&
\end{alignat*}
where
\begin{gather*}
 \tau_2=\frac{a_0 \left(\tau_{-1} \tau_0+\tau_{-2} \tau_1\right)}{b \tau_{-3}},\qquad
 \tau_3=\frac{\tau_0 \tau_1+\tau_{-1} \tau_2}{b \tau_{-2}}.
\end{gather*}
For each element $w\in\widetilde{W}\big((A_1+A_1')^{(1)}\big)$ and function $F=F(a_i,b,\tau_j)$,
we use the notation $w.F$ to mean $w.F=F(w.a_i,w.b,w.\tau_j)$, that is, $w$ acts on the arguments from the left.
We note that the group of transformations $\widetilde{W}\big((A_1+A_1')^{(1)}\big)$
forms the extended af\/f\/ine Weyl group of type $(A_1+A_1)^{(1)}$~\cite{JNS2015:MR3403054}.
Namely, the transformations satisfy the fundamental relations
\begin{gather*}
 {s_0}^2={s_1}^2=(s_0s_1)^\infty=1,\qquad {w_0}^2={w_1}^2=(w_0w_1)^\infty=1,\\
 \pi^2=1,\qquad \pi s_0=s_1\pi,\qquad \pi w_0=w_1\pi,
\end{gather*}
and the action of $W\big(A_1^{(1)}\big)=\langle s_0,s_1\rangle$ and that of $W\big(A_1^{(1)}{}'\big)=\langle w_0,w_1\rangle$ commute.
We note that the relation $(ww')^\infty=1$ for transformations~$w$ and~$w'$ means that
there is no positive inte\-ger~$N$ such that $(ww')^N=1$.

To iterate each variable $\tau_i$, we need the translations $T_i$, $i=1,2,3$, def\/ined by
\begin{gather*}
 T_1=w_0 w_1,\qquad T_2=\pi s_1 w_0,\qquad T_3=\pi s_0 w_0.
\end{gather*}
Note that $T_i$, $i =1,2,3$, commute with each other and $T_1T_2T_3=1$. The actions of these on the parameters are given by
\begin{gather*}
T_1\colon \ (a_0,a_1,b)\mapsto(a_0,a_1,qb),\qquad T_2\colon \ (a_0,a_1,b)\mapsto\big(qa_0,q^{-1}a_1,b\big),\\
T_3\colon \ (a_0,a_1,b)\mapsto\big(q^{-1}a_0,qa_1,q^{-1}b\big),
\end{gather*}
where the parameter $q=a_0a_1$ is invariant under the action of translations. We def\/ine $\tau$ functions by
\begin{gather*}
 \tau^{l_1}_{l_2}={T_1}^{l_1}{T_2}^{l_2}(\tau_{-3}),
\end{gather*}
where $l_1,l_2\in\mathbb{Z}$. We note that
\begin{gather*}
 \tau_{-3}=\tau^0_0,\qquad \tau_{-2}=\tau^1_1,\qquad \tau_{-1}=\tau^1_0,\qquad \tau_0=\tau^2_1,\qquad \tau_1=\tau^2_0,\qquad \tau_2=\tau^3_1,\qquad \tau_3=\tau^3_0.
\end{gather*}

\subsection{Discrete Painlev\'e equations}
Let
\begin{gather*}
 f_0=\frac{\tau _{-2} \tau _1}{\tau _{-1} \tau _0},\qquad f_1=\frac{\tau _{-3} \tau _0}{\tau _{-2} \tau _{-1}},\qquad
 f_2=\frac{(\tau_{-1})^2}{\tau_{-3} \tau_1},
\end{gather*}
where
\begin{gather*}
 f_0f_1f_2=1.
\end{gather*}
The action of $\widetilde{W}\big((A_1+A_1')^{(1)}\big)$ on the variables $f_i$, $i=0,1,2$, is given by
\begin{alignat*}{3}
 &s_0 \colon \ && (f_0,f_1,f_2)\mapsto
 \left(\frac{f_0 (a_0 f_0+a_0+f_1)}{f_0+f_1+1},\frac{f_1 (a_0 f_0+f_1+1)}{a_0 (f_0+f_1+1)},\right.& \\
 &&& \left. \hphantom{(f_0,f_1,f_2)\mapsto}{}\quad \frac{a_0 f_2(f_0+f_1+1)^2}{(a_0 f_0+a_0+f_1)(a_0 f_0+f_1+1)}\right),& \\
 &s_1\colon \ && (f_0,f_1)\mapsto
 \left(\frac{f_0 (a_0 a_1+b f_0 f_1)}{a_1 (a_0+b f_0 f_1)},\frac{a_1 f_1 (a_0+b f_0 f_1)}{a_0 a_1+b f_0 f_1}\right),& \\
 &w_0\colon \ && (f_0,f_1,f_2)\mapsto \left(\frac{a_0 (f_0+1)}{b f_0 f_1},\frac{a_0 f_0+a_0+b f_0 f_1}{a_0 b f_0 (f_0+1)},
 \frac{b^2 f_0}{f_2 (a_0 f_0+a_0+b f_0 f_1)}\right),& \\
 &w_1\colon \ && (f_0,f_1)\mapsto\left(f_1,f_0\right),&\\
 &\pi\colon \ && (f_1,f_2)\mapsto\left(\frac{a_0 (f_0+1)}{b f_0 f_1},\frac{b f_1}{a_0(f_0+1)}\right).&
\end{alignat*}
By letting
\begin{gather*}
 f_{l_2}^{(0)}={T_2}^{l_2}(f_0),\qquad f_{l_2}^{(1)}={T_2}^{l_2}(f_1),\qquad f_{l_2}^{(2)}={T_2}^{l_2}(f_2),
\end{gather*}
the actions of $T_i$, $i=1,2,3$:
\begin{alignat*}{3}
 &T_1(f_1)f_1=\frac{a_0(f_0+1)}{b f_0},\qquad && T_1(f_0)f_0=\frac{T_1(f_1)+1}{b T_1(f_1)},& \\
 &T_2(f_2)f_2=\frac{b}{q f_1(f_1+1)},\qquad && T_2(f_1)f_1=\frac{a_0 (b+q T_2(f_2))}{T_2(f_2)(q a_0 T_2(f_2)+b)},& \\
 &T_3(f_0)f_0=\frac{a_1 b+q f_2}{f_2(b+q f_2)},\qquad && T_3(f_2)f_2=\frac{a_1 b}{q T_3(f_0)(T_3(f_0)+1)},&
\end{alignat*}
lead the following $q$-Painlev\'e equations
\begin{gather}
T_1\big(f_{l_2}^{(1)}\big)f_{l_2}^{(1)}=\frac{q^{l_2}a_0\big(f_{l_2}^{(0)}+1\big)}{b f_{l_2}^{(0)}},\qquad
T_1\big(f_{l_2}^{(0)}\big)f_{l_2}^{(0)}=\frac{T_1\big(f_{l_2}^{(1)}\big)+1}{b T_1\big(f_{l_2}^{(1)}\big)}, \label{eqn:A1A1_qp2_1}\\
 T_2\big(f_{l_2}^{(2)}\big)f_{l_2}^{(2)}=\frac{b}{q f_{l_2}^{(1)}\big(f_{l_2}^{(1)}+1\big)},\qquad
T_2\big(f_{l_2}^{(1)}\big)f_{l_2}^{(1)}=\frac{q^{l_2}a_0 \big(b+q T_2\big(f_{l_2}^{(2)}\big)\big)}{T_2\big(f_{l_2}^{(2)}\big)\big(q^{l_2+1} a_0 T_2\big(f_{l_2}^{(2)}\big)+b\big)},
\label{eqn:A1A1_qP_T2}\\
T_3\big(f_{l_2}^{(0)}\big)f_{l_2}^{(0)}=\frac{a_1 b+q^{l_2+1} f_{l_2}^{(2)}}{q^{l_2}f_{l_2}^{(2)}\big(b+q f_{l_2}^{(2)}\big)},\qquad
 T_3\big(f_{l_2}^{(2)}\big)f_{l_2}^{(2)}=\frac{a_1 b}{q^{l_2+1} T_3\big(f_{l_2}^{(0)}\big)\big(T_3\big(f_{l_2}^{(0)}\big)+1\big)}.
\label{eqn:A1A1_qP_T3}
\end{gather}
We note that equation~\eqref{eqn:A1A1_qp2_1} is known as a $q$-discrete analogue of Painlev\'e~II equation~\cite{KTGR2000:MR1789477}
and can be rewritten as the following single second-order ordinary dif\/ference equation~\cite{RG1996:MR1399286,RGTT2001:MR1838017,SakaiH2001:MR1882403}:
\begin{gather}\label{eqn:A1A1_qp2_2}
 \left(T_1\big(f_{l_2}^{(0)}\big)f_{l_2}^{(0)}-\frac{1}{b}\right)
 \left({T_1}^{-1}\big(f_{l_2}^{(0)}\big)f_{l_2}^{(0)}-\frac{q}{b}\right)
 =\frac{a_1}{q^{l_2}b} \frac{f_{l_2}^{(0)}}{1+f_{l_2}^{(0)}}.
\end{gather}

\subsection[Hypergeometric $\tau$ functions]{Hypergeometric $\boldsymbol{\tau}$ functions}
\label{subsection:HGtau_A1A1}
We here def\/ine hypergeometric $\tau$ functions of $\widetilde{W}\big((A_1+A_1')^{(1)}\big)$-type by $\tau^{l_1}_{l_2}$ satisfying the following conditions:
\begin{enumerate}\itemsep=0pt
\item[(i)] $\tau^{l_1}_{l_2}$ satisfy the action of the translation subgroup of $\widetilde{W}\big((A_1+A_1')^{(1)}\big)$, $\langle T_1,T_2,T_3\rangle$;
\item[(ii)] $\tau^{l_1}_{l_2}$ are functions in $b$ consistent with the action of $T_1$, i.e.,
$ \tau^{l_1}_{l_2}=\tau_{l_2}(q^{l_1}b)$;

\item[(iii)] $\tau^{l_1}_{l_2}$ satisfy the following boundary conditions:
$\tau^{l_1}_{l_2}=0$, for $l_2<0$;
\end{enumerate}
under the conditions of parameters
\begin{gather}\label{eqn:A1A1_condition_para}
 a_0=1,\qquad a_1=q.
\end{gather}

In a similar manner as Section~\ref{subsection:HGtau_A4}, we obtain the following theorem.

\begin{Theorem}\label{theorem:A1A1_hypertau}
The hypergeometric $\tau$ functions of $\widetilde{W}\big((A_1+A_1')^{(1)}\big)$-type are given by the following
\begin{gather*}
\tau^{l_1}_0=\Gamma\big(q^{l_1}b;q,q\big),\qquad
\tau^{l_1}_{l_2}=\frac{\Gamma\big(q^{l_1}b;q,q\big)}{\Theta\big(q^{l_1}b;q\big)^{l_2}} \psi^{l_1}_{l_2},
\end{gather*}
where $l_1\in\bbZ$, $l_2\in\bbZ_{>0}$ and
the functions $\big\{\psi^{l_1}_{l_2}\big\}_{l_1\in\bbZ,\, l_2\in\bbZ_{>0}}$ are given by the following $l_2\times l_2$ determinants
\begin{gather*}
 \psi^{l_1}_{l_2} =\begin{vmatrix}
 F_{l_1}& F_{l_1+1}&\cdots&F_{l_1+l_2-1}\\
 F_{l_1-1}& F_{l_1}&\cdots&F_{l_1+l_2-2}\\
 \vdots&\vdots &\ddots&\vdots\\
 F_{l_1-l_2+1}&F_{l_1-l_2+2}&\cdots&F_{l_1}
 \end{vmatrix}.
\end{gather*}
Here, the function $F_n$ is given by
\begin{gather*}
 F_n =\frac{\Theta\big({-}q^{(2n-3)/4}b^{1/2};q^{1/2}\big)}{q^n b}
 \left( A_n \,{}_1\varphi_1\left(\begin{matrix}0\\ -q^{1/2}\end{matrix};q^{1/2},-q^{(2n-5)/4}b^{1/2}\right)\right.\\
\left.\hphantom{F_n =}{} +B_n e^{\pi {\rm i}\log{b}/\log{q}}\, {}_1\varphi_1\left(\begin{matrix}0\\ -q^{1/2}\end{matrix};q^{1/2},q^{(2n-5)/4}b^{1/2}\right)
 \right),
\end{gather*}
where $A_n$ and $B_n$ are periodic functions of period one with respect to $n$, that is,
\begin{gather*}
 A_{n+1}=A_n,\qquad B_{n+1}=B_n.
\end{gather*}
\end{Theorem}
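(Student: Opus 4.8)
The plan is to mirror, step by step, the three-step construction carried out for the $\widetilde{W}\big(A_4^{(1)}\big)$-case in Section~\ref{subsection:HGtau_A4}, now for the simpler rank-two situation. First I would translate conditions (i)--(iii) together with the parameter specialization~\eqref{eqn:A1A1_condition_para} into a closed system of bilinear equations for the $\tau$ functions $\tau^{l_1}_{l_2}$. Concretely, using the explicit actions of $s_0$, $s_1$, $w_0$, $w_1$, $\pi$ on the seven variables $\tau_{-3},\dots,\tau_3$ and the definitions $T_1=w_0w_1$, $T_2=\pi s_1 w_0$, $T_3=\pi s_0 w_0$, I would compute the action of $T_1$, $T_2$, $T_3$ and their inverses on the $\tau_i$, express everything in terms of $\tau^{l_1}_{l_2}$, and extract a discrete Toda-type bilinear equation for the iteration in $l_2$ (the analogue of Lemma with equations~\eqref{eqns:A4_1d_toda}). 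Imposing the boundary condition $\tau^{l_1}_{l_2}=0$ for $l_2<0$ on this Toda equation at $l_2=0$ then yields an explicit product formula for $\tau^{l_1}_0$; the elliptic-gamma identity $\Ga(q^n a;p,q)/\Ga(a;p,q)=\prod_{i=0}^{n-1}\Theta(q^i a;p)$ listed in the preliminaries makes the verification of $\tau^{l_1}_0=\Ga\big(q^{l_1}b;q,q\big)$ immediate.

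Next, writing $\tau^{l_1}_1=\tau^{l_1}_0\,\big(F$-factor$\big)$ in analogy with~\eqref{eqn:tau1tau0H}, I would reduce the remaining bilinear relations to a small set of contiguity relations for the ratio, culminating in a three-term recurrence in $l_1$ (the analogue of~\eqref{eqn:H_3term}) together with one or two first-order shift relations. Solving the three-term recurrence by the Frobenius-type ansatz $G=\sum_{n\ge0}c_n t^{n+\rho}$ — exactly as in the proof of the Lemma producing~\eqref{eqn:tau1_G} — produces the two linearly independent solutions built from ${}_1\varphi_1$ series with base $q^{1/2}$; the prefactors $\Theta\big({-}q^{(2n-3)/4}b^{1/2};q^{1/2}\big)/(q^n b)$ and the exponential $e^{\pi{\rm i}\log b/\log q}$ emerge as the two theta-function solutions of the indicial equation, and the coefficients $A_n$, $B_n$ are pinned down (up to period-one functions) by the first-order shift relations, just as $A(\al,\be)$, $B(\al,\be)$ were fixed in~\eqref{eqns:rels_AB}. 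Finally, substituting the rescaling $\tau^{l_1}_{l_2}=\dfrac{\Ga(q^{l_1}b;q,q)}{\Theta(q^{l_1}b;q)^{l_2}}\,\psi^{l_1}_{l_2}$ into the Toda bilinear equation turns it into the Hirota form $\psi^{l_1}_{l_2+1}\psi^{l_1}_{l_2-1}=(\psi^{l_1}_{l_2})^2-\psi^{l_1+1}_{l_2}\psi^{l_1-1}_{l_2}$ (the analogue of~\eqref{eqn:Phi_bilinear}), whose solution with the stated boundary data is the Jacobi--Trudi determinant $\det(F_{l_1+i-j})_{i,j=1,\dots,l_2}$; this last step is a standard Plücker-relation argument.

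The main obstacle I anticipate is Step~1: correctly deriving and bookkeeping the full list of bilinear equations coming from the actions of $T_1$, $T_2$, $T_3$ (and inverses) on all seven variables, and verifying that under~\eqref{eqn:A1A1_condition_para} they collapse consistently onto a single Toda recurrence plus the handful of $F$-contiguity relations — the rank-two analogue of the ten ``types'' catalogued in the $A_4^{(1)}$ case. Once that reduction is in place, Steps~2 and~3 are essentially the same Frobenius/determinant computations already performed above, with $q$ replaced by $q^{1/2}$ in the hypergeometric series because the relevant shift is by a half-period in the $b$-variable. A secondary technical point is tracking the theta-function and fractional-power prefactors so that the periodicity $A_{n+1}=A_n$, $B_{n+1}=B_n$ comes out cleanly; this is routine but requires care with branches of $b^{1/2}$ and of $e^{\pi{\rm i}\log b/\log q}$.
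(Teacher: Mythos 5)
Your proposal is correct and follows exactly the route the paper intends: the paper gives no separate proof of Theorem~\ref{theorem:A1A1_hypertau}, stating only that it is obtained ``in a similar manner as Section~\ref{subsection:HGtau_A4}'', and your three-step plan (Toda-type bilinear equations from the translation actions with the boundary condition fixing $\tau^{l_1}_0$ as an elliptic gamma function, a three-term contiguity recurrence solved by ${}_1\varphi_1$ series in base $q^{1/2}$ for the $l_2=1$ level, and the rescaling to the Hirota form solved by the Jacobi--Trudi determinant) is precisely that adaptation. The points you flag as requiring care (the bookkeeping of the bilinear equations and the branches of $b^{1/2}$) are the genuinely laborious parts, but the method is the right one.
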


Moreover, Theorem \ref{theorem:A1A1_hypertau} leads the following corollary.

\begin{Corollary}\label{corollary:A1A1_HGsol}
Under the condition \eqref{eqn:A1A1_condition_para},
the hypergeometric solutions of $q$-Painlev\'e equations \eqref{eqn:A1A1_qp2_1}--\eqref{eqn:A1A1_qp2_2}
are given by
\begin{gather*}
 f_{l_2}^{(0)}=-\frac{1}{qb} \frac{\psi^1_{l_2+1}\psi^2_{l_2}}{\psi^1_{l_2}\psi^2_{l_2+1}},\qquad
 f_{l_2}^{(1)}=q^{l_2+1} \frac{\psi^0_{l_2}\psi^2_{l_2+1}}{\psi^1_{l_2+1}\psi^1_{l_2}},\qquad
 f_{l_2}^{(2)}=-\frac{b}{q^{l_2}} \frac{(\psi^1_{l_2})^2}{\psi^0_{l_2}\psi^2_{l_2}}.
\end{gather*}
Note that the actions of translations $T_i$, $i=1,2,3$,
on these solutions are given by the following
\begin{gather*}
 T_1\colon \ \big(b,q,\psi^{l_1}_{l_2}\big) \mapsto\big(qb,q,\psi^{l_1+1}_{l_2}\big),\qquad
T_2\colon \ \big(b,q,\psi^{l_1}_{l_2}\big) \mapsto\big(b,q,\psi^{l_1}_{l_2+1}\big),\\
T_3\colon \ \big(b,q,\psi^{l_1}_{l_2}\big) \mapsto\big(q^{-1}b,q,\psi^{l_1-1}_{l_2-1}\big).
\end{gather*}
\end{Corollary}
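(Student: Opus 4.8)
The plan is to follow the same route as for Corollary~\ref{corollary:A4_HGsol}. By construction the $q$-Painlev\'e equations \eqref{eqn:A1A1_qp2_1}--\eqref{eqn:A1A1_qp2_2} are merely the $T_1$-action (and, for \eqref{eqn:A1A1_qP_T2}, the $T_2$-action) on the variables $f_0,f_1,f_2$ rewritten, so any family of $\tau$ functions carrying the action of the translation subgroup $\langle T_1,T_2,T_3\rangle$ of $\widetilde{W}\big((A_1+A_1')^{(1)}\big)$ produces a solution; it then remains to feed in the explicit hypergeometric $\tau$ functions of Theorem~\ref{theorem:A1A1_hypertau}, which satisfy condition~(i), and read off the answer. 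First I would rewrite $f_0,f_1,f_2$ as ratios of the $\tau^{l_1}_{l_2}$ using the dictionary $\tau_{-3}=\tau^0_0$, $\tau_{-2}=\tau^1_1$, $\tau_{-1}=\tau^1_0$, $\tau_0=\tau^2_1$, $\tau_1=\tau^2_0$, and then apply ${T_2}^{l_2}$ with the help of ${T_2}^{l_2}\big(\tau^{l_1}_{l_2'}\big)=\tau^{l_1}_{l_2'+l_2}$; this gives
\begin{gather*}
 f^{(0)}_{l_2}=\frac{\tau^1_{l_2+1}\tau^2_{l_2}}{\tau^1_{l_2}\tau^2_{l_2+1}},\qquad
 f^{(1)}_{l_2}=\frac{\tau^0_{l_2}\tau^2_{l_2+1}}{\tau^1_{l_2+1}\tau^1_{l_2}},\qquad
 f^{(2)}_{l_2}=\frac{\big(\tau^1_{l_2}\big)^2}{\tau^0_{l_2}\tau^2_{l_2}}.
\end{gather*}

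Next I would substitute $\tau^{l_1}_{l_2}=\Gamma\big(q^{l_1}b;q,q\big)\,\Theta\big(q^{l_1}b;q\big)^{-l_2}\psi^{l_1}_{l_2}$ from Theorem~\ref{theorem:A1A1_hypertau}, with the convention $\psi^{l_1}_0=1$ (consistent with $\tau^{l_1}_0=\Gamma\big(q^{l_1}b;q,q\big)$), and simplify each ratio. The elliptic-gamma factors recombine by the shift identity $\Gamma\big(q^{l_1+1}b;q,q\big)/\Gamma\big(q^{l_1}b;q,q\big)=\Theta\big(q^{l_1}b;q\big)$, and the residual powers of $\Theta$ by $\Theta\big(q^na;q\big)/\Theta(a;q)=(-1)^n\prod_{i=0}^{n-1}\big(q^ia\big)^{-1}$; in particular $\Theta(q^2b;q)/\Theta(qb;q)=-(qb)^{-1}$ and $\Theta(qb;q)/\Theta(b;q)=-b^{-1}$. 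Carrying out this collapse should produce exactly the scalar factors $-\frac{1}{qb}$, $q^{l_2+1}$ and $-\frac{b}{q^{l_2}}$ in front of the three $\psi$-ratios, which, after the specialisation~\eqref{eqn:A1A1_condition_para}, are the asserted formulae; as a quick consistency check one verifies that the product of the three expressions equals $1$, matching $f_0f_1f_2=1$. The tabulated action of $T_1,T_2,T_3$ on the solutions then follows from the parameter actions $T_1\colon b\mapsto qb$, $T_3\colon b\mapsto q^{-1}b$, the covariance $T_1\big(\tau^{l_1}_{l_2}\big)=\tau^{l_1+1}_{l_2}$, $T_2\big(\tau^{l_1}_{l_2}\big)=\tau^{l_1}_{l_2+1}$ and the relation $T_1T_2T_3=1$: peeling the $\Gamma\Theta$-prefactor off $\tau^{l_1}_{l_2}$ turns these into $T_1\colon\psi^{l_1}_{l_2}\mapsto\psi^{l_1+1}_{l_2}$, $T_2\colon\psi^{l_1}_{l_2}\mapsto\psi^{l_1}_{l_2+1}$ and $T_3\colon\psi^{l_1}_{l_2}\mapsto\psi^{l_1-1}_{l_2-1}$.

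The bulk of this is routine, and the main obstacle is simply careful bookkeeping of the theta- and elliptic-gamma shift factors so that the scalar coefficients multiplying the $\psi$-ratios come out with precisely the right signs and powers of $q$ and $b$. A subsidiary point to keep in mind is that $T_2$ and $T_3$ move the parameters $a_0,a_1$ off the slice $a_0=1$, $a_1=q$; this causes no trouble because, after the specialisation, the $f$-variables and the functions $\psi^{l_1}_{l_2}$ depend only on $b$ and $q$, so the translation action on the solutions is genuinely the one displayed.
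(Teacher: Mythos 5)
Your proposal is correct and follows the same (essentially only) route the paper intends: express $f^{(0)}_{l_2},f^{(1)}_{l_2},f^{(2)}_{l_2}$ as ratios of $\tau^{l_1}_{l_2}$ via the dictionary and the $T_2$-shift, substitute the explicit form from Theorem~\ref{theorem:A1A1_hypertau} with $\psi^{l_1}_0=1$, and collapse the $\Gamma$- and $\Theta$-prefactors using $\Gamma(qa;q,q)/\Gamma(a;q,q)=\Theta(a;q)$ and $\Theta(qa;q)/\Theta(a;q)=-a^{-1}$; your scalar coefficients $-\tfrac{1}{qb}$, $q^{l_2+1}$, $-\tfrac{b}{q^{l_2}}$ and the translation actions on $\psi^{l_1}_{l_2}$ all check out (the paper states the corollary without a written proof, as an immediate consequence of the theorem). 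No gaps.
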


\subsection*{Acknowledgements}

The author would like to express his sincere thanks to Dr.~Milena Radnovic for her valuable comments.
This research was supported by grant \# DP130100967 from the Australian Research Council.

\pdfbookmark[1]{References}{ref}
\LastPageEnding

\end{document}